\documentclass[12pt]{article}

\usepackage{geometry}
\usepackage{yfonts}
\usepackage{subcaption}

\usepackage{amsthm}
\RequirePackage{etex}

\newcounter{mycou}

\usepackage{pict2e}

\makeatletter
\newcommand{\CCOp}{\mathord{\mathpalette\nicoud@YESNO{\nicoud@path{\fillpath}}}}
\newcommand{\nicoud@YESNO}[2]{%
  \begingroup
  \settoheight{\unitlength}{$#1X$}%
  \begin{picture}(0.7,1)
  \linethickness{\variable@rule{#1}}%
  \roundcap\roundjoin
  \nicoud@path{\strokepath}
  #2
  \Line(0.35,-0.35)(0.35,1.08)
  \Line(0.55,-0.35)(0.55,1.08)
  \end{picture}%
  \endgroup
}
\newcommand{\nicoud@path}[1]{%
  \moveto(0.5,0.9)
  \lineto(0.5,1)\lineto(0.6,1)\lineto(0.6,0.9)
  \closepath
  \moveto(0.3,0.9)
  \lineto(0.3,1)\lineto(0.4,1)\lineto(0.4,0.9)
  \closepath
  \moveto(0.5,-0.27)
  \lineto(0.5,-0.17)\lineto(0.6,-0.17)\lineto(0.6,-0.27)
  \closepath
  \moveto(0.3,-0.27)
  \lineto(0.3,-0.17)\lineto(0.4,-0.17)\lineto(0.4,-0.27)
  \closepath
  #1
}
\newcommand{\variable@rule}[1]{%
  \fontdimen8  
  \ifx#1\displaystyle\textfont3\else
    \ifx#1\textstyle\textfont3\else
      \ifx#1\scriptstyle\scriptfont3\else
        \scriptscriptfont3\relax
  \fi\fi\fi
}
\makeatletter

\makeatletter
\newcommand{\subalign}[1]{%
  \vcenter{%
    \Let@ \restore@math@cr \default@tag
    \baselineskip\fontdimen10 \scriptfont\tw@
    \advance\baselineskip\fontdimen12 \scriptfont\tw@
    \lineskip\thr@@\fontdimen8 \scriptfont\thr@@
    \lineskiplimit\lineskip
    \ialign{\hfil$\m@th\scriptstyle##$&$\m@th\scriptstyle{}##$\hfil\crcr
      #1\crcr
    }%
  }%
}
\makeatother

 \newtheoremstyle{theoremdd}
  {0}
  {0}
  {\itshape}
  {0pt}
  {\bfseries}
  {}
  { }
  {\thmname{#1}\thmnumber{ #2}\textnormal{\thmnote{ (#3)}}}

\theoremstyle{theoremdd}

\usepackage{setspace,multirow}

\makeatletter
\newcommand*{\rom}[1]{\expandafter\@slowromancap\romannumeral #1@}
\makeatother

\usepackage{lmodern,enumerate,rotating,anyfontsize}
\usepackage{amsmath,amsfonts,amssymb,mathtools,blkarray,soul,etoolbox}
\usepackage{tikz-cd}
\usepackage{tikz-network}

\makeatletter
\newcommand{\mylabel}[2]{#2\def\@currentlabel{#2}\label{#1}}
\makeatother

\usetikzlibrary{cd}
\usetikzlibrary{shapes,shapes.geometric,arrows,fit,calc,positioning,automata}
\usetikzlibrary {circuits.logic.IEC}

\usepackage{stmaryrd}
\SetSymbolFont{stmry}{bold}{U}{stmry}{m}{n}

\definecolor{green}{rgb}{0.1,0.7,0.1}

\DeclareMathOperator{\CC}{CC}

\DeclareMathOperator{\Obs}{Obs}

\DeclareMathOperator{\obs}{obs}

\DeclareMathOperator{\inv}{inv}

\allowdisplaybreaks

\newcommand{\N}{\mathbb{N}}

\newcommand{\dt}{\delta}

\newcommand{\ep}{\epsilon}

\newcommand{\Scal}{\mathcal{S}}

\newcommand{\Sig}{\Sigma}

\newcommand{\Ycal}{\mathcal{Y}}

\newcommand{\Mt}{\mathcal{M}}

\newcommand{\llb}{\llbracket}
\newcommand{\rrb}{\rrbracket}

\newcommand{\Ysf}{\mathsf{Y}}
\newcommand{\ysf}{\mathsf{y}}

\newtheorem{theorem}{Theorem}[section]
\newtheorem{definition}{Definition}

\newtheorem{remark}{Remark}

\newtheorem{fact}{Fact}

\usepackage{times,amssymb,amsfonts,mathtools,graphicx,tikz,algorithm,algorithmic,url,hhline,booktabs}
\usetikzlibrary{automata,arrows,positioning}

\usepackage[english]{babel}
\addto\captionsenglish{%
}
\addto\extrasenglish{%
}

\usepackage{aliascnt}

\theoremstyle{plain}

\newaliascnt{lemma}{theorem}

\aliascntresetthe{lemma}

\newaliascnt{proposition}{theorem}

\aliascntresetthe{proposition}

\newaliascnt{corollary}{theorem}

\aliascntresetthe{corollary}

\newaliascnt{example}{theorem}
\newtheorem{example}[example]{Example}
\aliascntresetthe{example}

\usetikzlibrary{shadows}

\makeatletter
\let\NAT@parse\undefined
\makeatother
\usepackage[colorlinks,linkcolor=blue,anchorcolor=blue,citecolor=green,urlcolor=cyan,hyperindex]{hyperref}

\newcommand{\PSPACE}{\mathsf{PSPACE}}

\newcommand{\EXPTIME}{\mathsf{EXPTIME}}

\usetikzlibrary{automata,arrows,positioning}
\usetikzlibrary{petri}

\tikzset{elliptic state/.style={draw, ellipse, thick, fill=gray!10}
}

\tikzset{rectangular state/.style={draw, rectangle, thick, fill=gray!10}
}

\tikzset{emptystate/.style={}
}

\tikzset{
    partial ellipse/.style args={#1:#2:#3}{
        insert path={+ (#1:#3) arc (#1:#2:#3)}
    }
}

\tikzset{
node distance=3cm, 
every state/.style={thick, fill=gray!10}, 
initial text=$ $, 
}



\usepackage{tcolorbox}
\usepackage{tabularx}
\usepackage{array}
\usepackage{colortbl}
\tcbuselibrary{skins}

\title{Order-2 bygone-state opacity of\\ labeled finite-state automata}

\author{Kuize Zhang\\
{\small School of Mathematics and Statistics}\\
{\small Xi'an Jiaotong University, 810049 Xi'an, China}\\
{\small kuize.zhang@xjtu.edu.cn}
}

\begin{document}

\date{}

\maketitle

{\bf Abstract}
  In this paper, we formulate a scenario that an agent can never be sure that another agent can uniquely determine the state of a finite-state automaton based on its observations to the automaton at the current and any past time as the property of order-$2$ bygone-state opacity. Based on our concurrent composition and the classical observer, we derive a tool to verify this property in doubly exponential time. The interest of this result lies in that we extend inference of finite automata from a single agent to two ordered agents.

{\bf Keywords}
  labeled finite-state automaton, order-$2$ bygone-state opacity, concurrent composition, observer, order-$2$ observer

\tableofcontents

\section{Introduction}
\label{sec:intro}

In this paper, we consider the scenario consisting of a \emph{finite-state automaton} (FSA) $G$ and
$2$ \emph{agents} $A_1$ and $A_2$.
The FSA is publicly known to both agents, each agent $A_i$ can observe a subset $E_i$ of events
of $G$ via a labeling function $\ell_i: E_i\to \Sig_i$ with $\Sig_i$ an alphabet,
and agent $A_2$ knows agent $A_1$'s observable events and labeling function. 
We formulate and compute what $A_2$ knows about $A_1$'s \emph{bygone-state estimate} of $G$ based on an observed label sequence generated by $G$. 
We also formulate such a property as follows: for every trace generated by $G$,
if $A_1$ can uniquely determine $G$'s consistent state at a past time according to the label sequence generated by the trace,
then $G$ can generate another trace such that based on the label sequence generated by 
the second trace $A_1$ cannot uniquely determine $G$'s
consistent state at a past time, and the two traces look the same to agent $A_2$ 
before and after the respective time instants $A_1$ is doing state estimations
(see \autoref{fig1:Ord2BGStateOpacity} as an illustration).
Roughly
speaking, $A_2$ cannot be sure whether $A_1$ can uniquely determine $G$'s state according
to its inferences of $A_1$'s bygone-state estimates of $G$ based on $A_2$'s observation
to $G$.
Such a property will be called \emph{order-$2$ bygone-state opacity}. When there is 
only one agent, this property degenerates to the infinite-step opacity
\cite{Saboori2012InfiniteStepOpacity}: $A_1$ cannot uniquely determine $G$'s state 
according to every observation to $G$ at every past time. The infinite-step opacity
verification problem is $\PSPACE$-complete in labeled finite-state automata (LFSAs)
\cite{Saboori2012InfiniteStepOpacity}
and currently the most efficient algorithm for verifying this property is based on the combination
of the concurrent composition \cite{Zhang2020DetPNFA} and the observer
(that is, the classical powerset construction \cite{RabinScott1959PowersetConstruction}),
see \cite{Zhang2023UnifiedFrame4DES} for details.

  \begin{figure}[!htbp]
	\centering
	\begin{tikzpicture}[>=stealth',shorten >=1pt,auto,node distance=4.0 cm, scale = 1.0, transform shape,  >=stealth,inner sep=5pt, every text node part/.style={align=center}]

	  \node[rectangular state] (plant1) {$G$};
	  \node[rectangular state, right = 4cm of plant1] (Usr1) {$A_1$};
	  \node[rectangular state, right = 4cm of Usr1] (Intr1) {$A_2$};

	  \path [->]
	  (Intr1) edge node [above, sloped] {infer $Q'|\alpha\beta\subset 2^Q$} (Usr1)
	  ;

	  \draw [->] ($(plant1.east)+(0,-0.1cm)$) -- ++(0,0) node [below, sloped, xshift=50, yshift=2] {observation $\gamma\zeta$} -- ($(Usr1.west)+(0,-0.1cm)$);
	  \draw [->] ($(Usr1.west)+(0,0.1cm)$) -- ++(0,0) node [above, sloped, xshift=-60, yshift=-2] {infer $G|\gamma\zeta=Q'$} -- ($(plant1.east)+(0,0.1cm)$);
	  \draw [->] (plant1.south) -- ++(0,-0.5) node [below, xshift=130] {observation $\alpha\beta$} -| (Intr1.south)
	  ;
    \end{tikzpicture}
	\caption{An illustration of order-$2$ bygone-state opacity: 
	  agent $A_2$ observes label sequence $\alpha\beta$ from $G$, infers all possible
	  traces $s_1s_2$ such that $s_1$ and $s_2$ generate label sequences $\alpha$ and $\beta$,
	  then agent $A_1$ observes $\gamma$ and $\zeta$ from $s_1$ and $s_2$ and computes state estimate $Q'$ of $G$ based on $\gamma$ and $\zeta$. All such $Q'$ form 
	$A_2$'s inferences of $A_1$'s state estimates based on $\alpha\beta$.}
	\label{fig1:Ord2BGStateOpacity} 
  \end{figure}

The main results of the paper are twofold: we formulate order-$2$ bygone-state opacity,
and derive a tool to verify this property in $2$-$\EXPTIME$ based on our concurrent
composition \cite{Zhang2020DetPNFA} and the classical observer \cite{RabinScott1959PowersetConstruction}.

\section{Preliminaries}
\label{sec:prelim}

\subsection{Notation}
\label{subsec:notation}

{\bf Notation.} 
Symbol $\N$ denotes the set of nonnegative integers.
For two nonnegative integers $i\le j$, $\llb i,j\rrb$ denotes the set of all integers no less than $i$ and no greater 
than $j$.
For a set $S$, $|S|$ denotes its cardinality and $2^S$ its power set. Symbols $\subset$,
$\not\subset$, and $\subsetneq$ denote the \emph{subset of}, \emph{not a subset of}, and
\emph{a strict subset of} relations, respectively. For two sets $A$ and $B$, $A\setminus B$ denotes
$\{x\in A|x\notin B\}$.
Let $\Sigma$ denote a finite \emph{alphabet}. 
Elements of $\Sigma$ are called \emph{letters}. As usual, 
$\Sig^*$ denotes the set of \emph{words} or \emph{strings} (i.e., finite-length sequences of letters)
over $\Sig$ including the empty word $\epsilon$, 
and denote $\Sig^{+}:=\Sig^*\setminus\{\epsilon\}$.
The \emph{length} of a word $w\in\Sig^*$ is the number of letters, counting repetitions, occurring in $w$,
and is denoted by $|w|$.
For a word $w\in \Sig^*$, its \emph{mirror image} $w^R$ is obtained by reversing the order
of the letters of $w$. For a string $(a_1,b_1) \dots (a_n, b_n) =: s$, $a_1 \dots a_n =: s(L)$, $b_1\dots b_n =:
s(R)$.
In this paper, we consider $2$ agents, and use alphabets $\Sig_i$ to denote the label sets of
agent $A_i$, $i=1,2$.
A \emph{(formal) language} is a subset of $\Sig^*$.
For two languages $L_1,L_2\subset\Sig^*$, their \emph{concatenation} $L_1L_2$ is defined as $\{e_1e_2|e_1\in L_1,
e_2\in L_2\}$. 
For two formal languages $L_1,L_2\subset\Sig^*$, denote their \emph{interleaving}
as $L_1 \interleave L_2 = \{e_1\dots e_n| (\exists\text{ subsequences }i_1\dots i_r,
j_1,\dots,j_s)[(i_1<\cdots< i_r) \wedge (j_1<\cdots<j_s) \wedge (r+s=n) \wedge
\{i_1,\dots,i_r,j_1,\dots,j_s\} = \{1,\dots,n\} \wedge e_{i_1}\dots e_{i_r}\in L_1
\wedge e_{j_1}\dots e_{j_s}\in L_2] \}$. Hence each word of $L_1\interleave L_2$ is obtained
by interleaving a word of $L_1$ and a word of $L_2$. 

\vspace{-0.2cm}
\noindent\rule{\linewidth}{0.4pt}
\vspace{-0.1cm}
\noindent\textbf{Symbols} used throughout the paper:

\vspace{-0.2cm}
\noindent\rule{\linewidth}{0.4pt}   

\indent $G=(Q,E,\dt,Q_0)$: finite-state automaton

$\Scal=(G,\Sig,\ell)$: labeled finite-state automaton

$G^{\inv}$: inverse finite-state automaton

$L(G)$: language generated by $G$ 

$\Mt(\Scal,\alpha) = \Mt_{\ell}(G,\alpha)$: current-state estimate of $G$ with respect to label sequence $\alpha$

$\Mt(\Scal,\alpha,\alpha\beta) = \Mt_{\ell}(G,\alpha,\alpha\beta)$: bygone-state estimate of $G$ with respect to label sequences $\alpha$ and $\alpha\beta$

$\CC(\Scal^1,\Scal^2)$: concurrent composition of labeled finite-state automata $\Scal^1$ and $\Scal^2$

$\CC_{\ell}(G^1,G^2)$: concurrent composition of finite-state automata $G^1$ and $G^2$ endowed with the same labeling function $\ell$

$\Obs(\Scal)=\Obs_{\ell}(G)$: observer of labeled finite-state automaton $\Scal=(G,\Sig,\ell)$

$\Obs^{\inv}(\Scal)=\Obs_{\ell}^{\inv}(G)$: inverse observer of labeled finite-state automaton $\Scal=(G,\Sig,\ell)$

$Y_1\Cap Y_2$ with $Y_1,Y_2\subset Q\times 2^Q$: $\left\{ X_1\cap X_2 | (\exists q\in Q) [(q,X_1)\in Y_1 \wedge (q,X_2)\in Y_2]\right\}$ (particularly if 
	such a state $q$ does not exist, then $Y_1\Cap Y_2=\{\emptyset\}$.)

\vspace{-0.2cm}
\noindent\rule{\linewidth}{0.4pt}   
\vspace{-0.2cm}
\indent $(G,\Sig_i,\ell_i)=:G_{A_i}$: finite-state automaton $G$ and agent $A_i$ endowed with labeling function $\ell_i:E\to\Sig_i$

$\Mt_{\ell_i}(G,\alpha)=:\Mt_i(G,\alpha)$: current-state estimate of $G$ done by agent $A_i$ with respect to label sequence $\alpha$

$\Mt_{\ell_i}(G,\alpha,\alpha\beta) =: \Mt_i(G,\alpha,\alpha\beta)$: bygone-state estimate of $G$ done by agent $A_i$ with respect to label sequences $\alpha$ and $\alpha\beta$

$\Obs_{\ell_i}(G) =: \Obs_{A_i}$: observer of $G$ with respect to agent $A_i$

$\Obs_{\ell_i}^{\inv}(G) =: \Obs_{A_i}^{\inv}$: inverse observer of $G$ with respect to agent $A_i$

$\Mt_{A_1\leftarrow A_2}(G,\alpha)$: order-$2$ current-state estimate of $G$ with respect to agents $A_1$ and $A_2$ and $A_2$'s label sequence $\alpha$

$\Mt_{A_1\leftarrow A_2}(G,\alpha,\alpha\beta)$: order-$2$ bygone-state estimate of $G$ with respect to agents $A_1$ and $A_2$ and $A_2$'s label sequences $\alpha$ and $\alpha\beta$

$\Obs_{A_1\leftarrow A_2}(G)$: order-$2$ observer

$\Obs_{A_1\leftarrow A_2}^{\inv}(G)$: order-$2$ inverse observer

$\CC_{\ell_{\ep}}\left(\Obs_{A_1\leftarrow A_2} (G), \Obs_{A_1\leftarrow A_2}^{\inv} (G) \right)$: concurrent composition of order-$2$ observer $\Obs_{A_1\leftarrow A_2} (G)$ and order-$2$ inverse observer $\Obs_{A_1\leftarrow A_2}^{\inv} (G)$ endowed with the $\ep$-labeling function

\vspace{-0.2cm}
\noindent\rule{\linewidth}{0.4pt}   

\vspace{-0.2cm}

\subsection{Labeled finite-state automata}

\begin{definition}
  A \emph{finite-state automaton} (FSA) is a quadruple 
  \begin{align}\label{FSA}
	G=(Q,E,\dt,Q_0),
  \end{align} where
  \begin{enumerate}
  	\item $Q$ is a finite set of \emph{states},
    \item $E$ is an alphabet of \emph{events},
	\item $\dt:Q\times E \to 2^Q$ is the \emph{transition function} (equivalently described as
		$\dt\subset Q\times E\times Q$ such that $(q,e,q')\in\dt$ if and only if $q'\in\dt(q,e)$),
	\item $Q_0\subset Q$ is a set of \emph{initial states}.
  \end{enumerate}
\end{definition}

\begin{definition}
  The \emph{inverse finite-state automaton}, denoted as $G^{\inv}$, of FSA $G$ as in \eqref{FSA} is a quadruple 
  \begin{align}\label{inv-FSA}
	(Q,E,\dt^{\inv},Q),
  \end{align} where
  \begin{enumerate}
	\item for all $q,q'\in Q$ and $e\in E$, $q'\in\dt^{\inv}(q,e)$ if and only if $q\in\dt(q',e)$,
	\item $Q$ is the initial state set.
  \end{enumerate}
\end{definition}

Transition function $\dt$ is recursively extended to $Q\times E^*\to 2^Q$: for all $q\in Q$, $u\in E^*$, and $e\in E$,
$\dt(q,\ep)=\{q\}$, $\dt(q,ue)=\bigcup_{p\in\dt(q,u)}\dt(p,e)$. 
Automaton $G$ is called \emph{deterministic} if $Q_0=\{q_0\}$ for some $q_0\in Q$, and
for all $q\in Q$ and $e\in E$, $|\dt(q,e)|\le 1$. A deterministic FSA $G$ is also 
denoted as $G=(Q,E,\dt,q_0)$, and in this case, $\dt$ is a partial function from $Q\times E$ to $Q$.

A transition $q\xrightarrow[]{e} q'$ with $q'\in \dt(q,e)$ means that when $G$ is in state $q$ and event 
$e$ occurs, $G$ transitions to state $q'$. A sequence $q_0\xrightarrow[]{e_1}\cdots \xrightarrow[]{e_n}
q_n$ of consecutive transitions with $n\in\N$ is called a \emph{run}\footnote{When $n=0$, the run degenerates
to a single state $q_0$.},
in which the event sequence $e_1\dots e_n$ is called
a \emph{trace (generated by $G$)} if $q_0\in Q_0$. A state $q\in Q$ is \emph{reachable} if there is a run
from some initial state to $q$. The \emph{reachable part} of $G$ consists of all
reachable states and transitions between them.
When showing an automaton, we usually only show its reachable part (e.g., \autoref{FA:fig27})
or its necessary part (e.g., \autoref{fig2:High-OrderCSO} and \autoref{fig4:High-OrderCSO})
when the automaton is large.
The \emph{language} $L(G)$ \emph{generated by $G$} is the set of traces generated by $G$. 

Occurrences of events of an FSA $G$ may be observable or not. Let alphabet $\Sig$ denote the set
of \emph{labels/outputs}. The \emph{labeling function} is defined as $\ell:E\to \Sig\cup\{\ep\}$, and 
is recursively extended to $\ell:E^*\to \Sig^*$. 
Particularly, denote $\ell_{\ep}:E\to\{\ep\}$ and call it \emph{$\ep$-labeling function}.
Denote $\ell(L(G))$ as the set of label sequences generated by $G$ with respect to labeling function $\ell$.
Denote $E_o=\{e\in E|\ell(e)\in \Sigma\}$,
$E_{uo}=\{e\in E|\ell(e)=\ep \}$, where the former denotes the set of \emph{observable events} and the latter 
denotes the set of \emph{unobservable events}. When an observable $e$ occurs; its label $\ell(e)$ is observed,
when an unobservable event occurs, nothing is observed. An \emph{observable transition} (resp., 
\emph{unobservable transition}) is a transition whose event is observable (resp., unobservable).
Two runs are called \emph{observationally equivalent with respect to a labeling function $\ell$} if their event sequences generate the same label sequence under $\ell$.
Sometimes we simply say two runs are observationally equivalent if $\ell$ is clear from the 
context.
Anologously, \emph{observationally equivalent event sequences} can be defined.


A \emph{labeled finite-state automaton} (LFSA) is denoted as 
\begin{equation}\label{LFSA}
  \Scal=(G,\Sig,\ell).
\end{equation}
Its inverse LFSA is $\Scal^{\inv}=(G^{\inv},\Sig,\ell)$.

The following definition of state estimate is critical to define all kinds of properties in LFSAs.
The \emph{current-state estimate} $\Mt(\Scal,\alpha)$ with respect to $\alpha\in\Sig^*$ is defined as
\begin{subequations}\label{eqn:CSE}
\begin{align}
  &\Mt(\Scal,\alpha)=\Mt_{\ell}(G,\alpha) \\
  =& \{q\in Q|(\exists\text{ run }q_0\xrightarrow[]{s} q)[q_0\in Q_0\wedge \ell(s)=\alpha]\},
\end{align}
\end{subequations}
which is the set of states of $G$ may be in when $\alpha$ is observed.
Evidently, $\Mt(\Scal,\alpha)=\emptyset$ when $\alpha\in\Sig^*\setminus \ell(L(G))$.

We formulate \emph{bygone-state estimate $\Mt(\Scal,\alpha,\alpha\beta)$ of $G$ with respect to $\alpha\beta\in\ell(L(G))$} as follows:
\begin{subequations}\label{eqn:ByGSE} 
\begin{align}
  &\Mt(\Scal,\alpha,\alpha\beta)=\Mt_{\ell}(G,\alpha,\alpha\beta) \\
  =& \{q\in Q|(\exists\text{ run }q_0\xrightarrow[]{s_1} q_1 \xrightarrow[]{s_2} q_2)[q_0\in Q_0\wedge \ell(s_1)=\alpha \wedge \ell(s_1s_2)=\alpha\beta]\},
\end{align}
\end{subequations}
which is the set of states of $G$ may be in when $\alpha$ has just been generated, given that 
the current observation is $\alpha\beta$.

In the sequel, we sometimes refer to an LFSA $\Scal$ as an FSA $G$ with respect to a labeling function
$\ell$.




\subsection{The concurrent composition}

We now recall our concurrent composition that will be used as a main tool in this paper.
The concurrent composition was first proposed in 
\cite{Zhang2020DetPNFA} to verify the \emph{negation} of strong detectability in 
arbitrary LFSAs and later on used in \cite{Zhang2023PTIMEVerEnforDetDES,Zhang2023UnifiedFrame4DES,Miao2025StrongInitialFinalOpacity}, for the verification of a series of properties such as strong detectability, diagnosability, and opacity. 
The concurrent-composition method provides a unified and currently the most efficient method to verifying 
all kinds of inference-based properties such as strong detectability, diagnosability, and predictability in LFSAs
\cite{Zhang2023UnifiedFrame4DES}, while the classical widely-used methods --- the detector
method \cite{Shu2011GDetectabilityDES} for verifying strong detectability, the twin-plant method
\cite{Jiang2001PolyAlgorithmDiagnosabilityDES} and the verifier method 
\cite{Yoo2002DiagnosabiliyDESPTime,Genc2009PredictabilityDES} for verifying diagnosability and
predictability are not universal and
all depend on two fundamental assumptions: deadlock-freeness (there are no dead states) and
divergence-freeness (no infinite runs of unobservable events can be generated), because these old methods were
proposed to verify inference-based properties themselves.
Apart from verifying inference-based properties, when used in combination with an
observer\footnote{Originally
proposed in \cite{RabinScott1959PowersetConstruction} for determinizing nondeterministic finite
automata with $\varepsilon$-transitions. The terminology ``observer'' dates back to 
\cite{Ozveren1990ObservabilityDES,Shu2007Detectability_DES}.}, the concurrent composition provides
a unified and currently the most efficient method to verifying all kinds 
of standard versions of state-based opacity and strong versions of state-based opacity in LFSAs
\cite{Zhang2023UnifiedFrame4DES,Han2023StrongCSOpacity_DES}.
See \cite[Table~4 and Table~5]{Zhang2023UnifiedFrame4DES} for details. In summary, the concurrent-composition 
method has fundamentally changed the state-of-the-art of discrete-event systems modeled by LFSAs.

\begin{definition}[\cite{Zhang2020DetPNFA}]\label{FA:def_CCa}
	Consider two LFSAs $\Scal^i=(Q_i,E_i,\dt_i,Q_{0i},\Sig,\ell_i)$, $i=1,2$, where $E_{io}$ and $E_{iuo}$ denote 
	the set of observable events of $\Scal^i$ and the set of unobservable events of $\Scal^i$, respectively.
	The \emph{concurrent
	composition}\footnote{Note that the concurrent composition can be naturally extended to extended LFSAs 
	in the sense that the labeling functions are extended to $\delta\to\Sig\cup\{\ep\}$, that is, each observable
  transition is assigned to a label of $\Sig$ and each unobservable transition is assigned to $\ep$.}
	$\CC(\Scal^1,\Scal^2)$ 
	of $\Scal^1$ and $\Scal^2$ 
	is defined by LFSA
	\begin{equation}\label{FA:eqn8}
	  \CC(\Scal^1,\Scal^2) = (Q',E',\dt',Q_0',\Sig',\ell'),
	\end{equation} where
	\begin{enumerate}
	\item $Q'=Q_1\times Q_2$;
	\item $E'=E_o'\cup E_{uo}'$, where $E_o'=\{(e_1,e_2)|e_1\in E_{1o},e_2\in E_{2o},
		\ell_1(e_1)=\ell_2(e_2)\}$,
		$E_{uo}'=\{(e_1,\epsilon)|e_1\in E_{1uo}\}\cup
		\{(\epsilon,e_2)|e_2\in E_{2uo}\}$;
	  \item for all $(q_1,q_2),(q_3,q_4)\in Q'$, $(e_{1o},e_{2o})
		\in E_o'$, $(e_{1uo},\epsilon),(\epsilon,e_{2uo})\in E_{uo}'$,
		\begin{itemize}
		  \item $((q_1,q_2),(e_{1o},e_{2o}),(q_3,q_4))\in\dt'$ 
			if and only if $(q_1,e_{1o},q_3)\in\dt_1$, $(q_2,e_{2o},q_4)\in\dt_2$,
		  \item $((q_1,q_2),(e_{1uo},\epsilon),(q_3,q_4))\in\dt'$ 
			if and only if $(q_1,e_{1uo},q_3)\in\dt_1$, $q_2=q_4$,
		  \item $((q_1,q_2),(\epsilon,e_{2uo}),(q_3,q_4))\in\dt'$ 
			if and only if $q_1=q_3$, $(q_2,e_{2uo},q_4)\in\dt_2$;
		\end{itemize}
	\item $Q_0'=Q_{01}\times Q_{02}$;
	\item 
	  for all $(e_{1o},e_{2o})\in E_o'$, $(e_{1uo},\epsilon)\in E_{uo}'$, and
	  $(\epsilon,e_{2uo})\in E_{uo}'$, $\ell'((e_{1o},e_{2o})):=\ell_1(e_{1o})=\ell_2(e_{2o})$,
	  $\ell'((e_{1uo},\epsilon)):=\ell_1(e_{1uo})=\ep$, $\ell'((\epsilon,e_{2uo})):=\ell_2(e_{2uo})=\ep$.
	\end{enumerate}
\end{definition}

The concurrent composition $\CC(\Scal^1,\Scal^2)$ exactly aggregates every pair of 
a run of $\Scal^1$ and a run of $\Scal^2$ that generate the same label sequence.
Particularly, if for $\Scal^1$ and $\Scal^2$, it holds that $E_1=E_2$ and $\ell_1
=\ell_2$, then denote $\CC(\Scal^1,\Scal^2) = \CC_{\ell_1}(G^1,G^2)$, where $G^i=
(Q_i,E_i,\dt_i,Q_{0i})$, $i=1,2$.

See \autoref{FA:exam19} for an illustration.

\begin{example}\label{FA:exam19} 
  An LFSA $\Scal$ and its self-composition $\CC(\Scal)$ are shown in \autoref{FA:fig26}.
	\begin{figure}[H]
     \centering
	 \subcaptionbox*{$\Scal$, where $q_0$ is the unique initial state which is indicated by an input arrow from nowhere, the symbol in a bracket after an event denotes the label of the event, e.g., $\ell(e_1)=a$, $\ell(e_2)=\ep$.}{
\begin{tikzpicture}[>=stealth',shorten >=1pt,auto,node distance=2.4 cm, scale = 0.9, transform shape,
	>=stealth,inner sep=2pt]

	\node[initial, initial where =above, state] (s0) {$q_0$};
	\node[state] (s1) [right of =s0] {$q_1$};
	\node[state] (s2) [left of =s0] {$q_2$};
	
	\path [->]
	(s0) edge [loop below] node [below, sloped] {$\begin{matrix}e_1(a)\\e_2(\epsilon)\end{matrix}$} (s0)
	(s0) edge node [above, sloped] {$e_3(b)$} (s1)
	(s0) edge node [above, sloped] {$e_4(b)$} (s2)
	(s1) edge [loop right] node [above, sloped] {$e_5(b)$} (s1)
	;

        \end{tikzpicture}
		}
		\hspace{0.3cm}
        \centering
		\subcaptionbox*{Reachable part of $\CC(\Scal)$.}{ 
\begin{tikzpicture}[>=stealth',shorten >=1pt,auto,node distance=3.0 cm, scale = 0.9, transform shape,
	>=stealth,inner sep=2pt]
		\node[rectangular state] (s1s2) {$q_1,q_2$};
		\node[rectangular state,initial,initial where=above] (s0s0) [right of =s1s2] {$q_0,q_0$};
		\node[rectangular state] (s2s1) [below of =s0s0] {$q_2,q_1$};
		\node[rectangular state] (s1s1) [right of =s0s0] {$q_1,q_1$};
		\node[rectangular state] (s2s2) [below of =s1s1] {$q_2,q_2$};

		\path [->]
		(s0s0) edge [out = 210, in = 240, loop] node [below, sloped] {$\begin{matrix}(e_1,e_1)\\(e_2,\epsilon)\\(\epsilon,e_2)\end{matrix}$} (s0s0)
		(s0s0) edge node [above, sloped] {$(e_3,e_4)$} (s1s2)
		(s0s0) edge node [above, sloped] {$(e_3,e_3)$} (s1s1)
		(s0s0) edge node [above, sloped] {$(e_4,e_3)$} (s2s1)
		(s0s0) edge node [above, sloped] {$(e_4,e_4)$} (s2s2)
		(s1s1) edge [loop below] node [below, sloped] {$(e_5,e_5)$} (s1s1)
		;
	\end{tikzpicture}
	}
	\caption{}
	\label{FA:fig26} 
	\end{figure} 
\end{example}

\subsection{The observer}

Now we recall the observer which will also be used as a main tool in the paper.

\begin{definition}[\cite{RabinScott1959PowersetConstruction,Shu2007Detectability_DES}]\label{FA:def_obs}
  Consider an LFSA $\Scal=(G,\Sigma,\ell)$, where $G$ is as in \eqref{FSA}.
  Its \emph{observer} $\Obs(\Scal)=\Obs_{\ell}(G)$\footnote{The term ``observer''
  dates back to \cite{Ozveren1990ObservabilityDES,Shu2007Detectability_DES}.} is defined by
  a deterministic FSA
	\begin{align}\label{FA:eqn_observer}
	  (Q_{\obs},\ell(E_o),\dt_{\obs},q_{0\obs}),
	\end{align}
	where
		\begin{enumerate}
			\item $Q_{\obs}=2^Q$,
			\item $\ell(E_o)=\ell(E)\setminus\{\ep\}$,
			\item for all $X\in Q_{\obs}$ and $a\in\ell(E_o)$, $\dt_{\obs}(X,a)= \bigcup_{q\in X}\bigcup_{
			  \substack{e\in E_o\\\ell(e)=a}}\bigcup_{s\in (E_{uo})^*}\dt(q,es)$,
			\item $q_{0\obs}=\bigcup_{q_0\in Q_0}\bigcup_{s\in (E_{uo})^*}\dt(q_0,s)$.
		\end{enumerate}
\end{definition}

The observer $\Obs(\Scal)$, actually the powerset construction 
\cite{RabinScott1959PowersetConstruction},
can be computed in time exponential in the size of $\Scal$, and has been extensively used 
for many years in both the computer science community and the control community.
The observer $\Obs(\Scal)$ aggregates state estimates of $\Scal$ along all generated
label sequences, which is formulated as follows:
\begin{fact}\label{fact1:High-OrderOpacity}
  for every run $q_{0\obs}\xrightarrow[]{\alpha}X$ of $\Obs(\Scal)$, $X=\Mt(\Scal,\alpha)$.
\end{fact}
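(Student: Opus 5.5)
The plan is to argue by induction on the length $n=|\alpha|$ of the label sequence read by $\Obs(\Scal)$. Since $\Obs(\Scal)$ is deterministic, for each $\alpha\in\ell(E_o)^*$ there is at most one run $q_{0\obs}\xrightarrow[]{\alpha}X$. So it suffices to show that whenever the run exists, $X$ coincides with $\Mt(\Scal,\alpha)$ as given in \eqref{eqn:CSE}. One subtlety is that $\dt_{\obs}(X,a)$ is always defined (possibly as $\emptyset$), so the run exists for every $\alpha$. This is consistent with the convention $\Mt(\Scal,\alpha)=\emptyset$ for $\alpha\notin\ell(L(G))$.

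\textbf{Base case} $\alpha=\ep$. A run $q_0\xrightarrow[]{s}q$ with $q_0\in Q_0$ and $\ell(s)=\ep$ is exactly a run with $s\in(E_{uo})^*$. Hence $\Mt(\Scal,\ep)=\bigcup_{q_0\in Q_0}\bigcup_{s\in(E_{uo})^*}\dt(q_0,s)=q_{0\obs}$, by item 4 of Definition~\ref{FA:def_obs}.

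\textbf{Inductive step.} Assume $q_{0\obs}\xrightarrow[]{\alpha}X$ with $X=\Mt(\Scal,\alpha)$, and let $a\in\ell(E_o)$. We show $\dt_{\obs}(X,a)=\Mt(\Scal,\alpha a)$ by double inclusion.
\begin{itemize}
\item[$(\subset)$] If $q'\in\dt(q,es)$ with $q\in X$, $e\in E_o$, $\ell(e)=a$ and $s\in(E_{uo})^*$, then the induction hypothesis gives a run $q_0\xrightarrow[]{t}q$ with $q_0\in Q_0$ and $\ell(t)=\alpha$. Concatenating yields the run $q_0\xrightarrow[]{tes}q'$, and $\ell(tes)=\alpha a$.
\item[$(\supset)$] Take a run $q_0\xrightarrow[]{w}q'$ with $q_0\in Q_0$ and $\ell(w)=\alpha a$. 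Let $e$ be the last observable event in $w$; it exists because $\ell(w)$ is nonempty. Write $w=tes$ with $s\in(E_{uo})^*$. Then $\ell(t)=\alpha$ and $\ell(e)=a$. Let $q$ be the state reached after $t$ along the run. Then $q\in\Mt(\Scal,\alpha)=X$ and $q'\in\dt(q,es)$, which is exactly the union in item 3 of Definition~\ref{FA:def_obs}.
\end{itemize}

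The only step needing care is the decomposition in $(\supset)$. It uses that $\ell$ maps each event to a single letter or to $\ep$, so $\ell(w)=\alpha a$ forces a unique split at the last observable event. Everything else is the unfolding of the definitions of $\dt$ extended to $E^*$ and of $\Mt$.
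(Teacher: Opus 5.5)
Your proof is correct. The paper gives no proof of this fact and treats it as the classical property of the Rabin--Scott powerset construction; your induction on $|\alpha|$ is exactly the standard argument behind it, with the base case taken from item~4 of the observer definition and the inductive step done by double inclusion, splitting a run at its last observable event. Your remark that $\dt_{\obs}$ is total is also right: the run exists for every $\alpha\in\ell(E_o)^*$, which is consistent with $\Mt(\Scal,\alpha)=\emptyset$ for $\alpha\notin\ell(L(G))$.
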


In order to derive the tool to verify order-$2$ bygone-state opacity, we additionally
need the inverse observer:

\begin{definition}\label{FA:def_inv_obs}
  Consider an LFSA $\Scal=(G,\Sigma,\ell)$, where $G$ is as in \eqref{FSA}.
  Its \emph{inverse observer} $\Obs^{\inv}(\Scal)=\Obs_{\ell}^{\inv}(G)$ is defined by
  the observer $\Obs_{\ell}(G^{\inv})$ of the inverse LFSA $(G^{\inv},\Sig,\ell)$:
	\begin{align}\label{FA:eqn_inv_observer}
	  (Q_{\obs},\ell(E_o),\dt_{\obs}^{\inv},Q),
	\end{align}
	where
		\begin{enumerate}
			\item $Q_{\obs}=2^Q$,
			\item $\ell(E_o)=\ell(E)\setminus\{\ep\}$,
			\item for all $X\in Q_{\obs}$ and $a\in\ell(E_o)$, $\dt_{\obs}^{\inv}(X,a) = \{q\in Q|(\exists\text{ run }q\xrightarrow[]{se} q')[(q'\in X) \wedge (\ell(s)=\ep \wedge e\in E \wedge \ell(e) = a)]\}$,
			\item $Q$ is the unique initial state.
		\end{enumerate}
\end{definition}
\begin{fact}\label{fact2:High-OrderOpacity}
  Consider an LFSA $S=(G,\Sig,\ell)$ and $\alpha,\beta\in\Sig^*$, $\Mt(\Scal,\alpha,\alpha\beta)=X_1\cap X_2$, where $X_1=\dt_{\obs}(q_{0\obs},\alpha)$, $X_2=\dt_{\obs}^{\inv}(Q,\beta^R)$.
\end{fact}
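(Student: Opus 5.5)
The plan is to treat the two factors separately. Fact~\ref{fact1:High-OrderOpacity} already identifies $X_1$ with the current-state estimate, so the real work is to identify $X_2$ with the set of states from which the ``future'' observation $\beta$ can be produced. The claimed equality then follows by splitting a witnessing run at the intermediate state $q_1$.

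\textbf{Step 1 (a characterization of $X_2$).} For $\beta\in\Sig^*$ define
\[
Z(\beta):=\{q\in Q\mid (\exists\text{ run } q\xrightarrow[]{s} q')[\ell(s)=\beta]\}.
\]
I will show $\dt_{\obs}^{\inv}(Q,\beta^R)=Z(\beta)$ by induction on $|\beta|$.

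\emph{Base case.} For $\beta=\ep$, both sides equal $Q$: the empty run works from every state, and $Q$ is the initial state of $\Obs^{\inv}(\Scal)$.

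\emph{Inductive step.} Write $\beta=a\beta'$ with $a\in\Sig$, so that $\beta^R=\beta'^R a$. Then
\[
\dt_{\obs}^{\inv}(Q,\beta^R)=\dt_{\obs}^{\inv}\bigl(Z(\beta'),a\bigr)
\]
by the induction hypothesis. By Definition~\ref{FA:def_inv_obs}, this is the set of $q$ admitting a run $q\xrightarrow[]{ue}q'$ with $\ell(u)=\ep$, $\ell(e)=a$, and $q'\in Z(\beta')$. I then check both inclusions.

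\begin{itemize}
\item[$\subset$:] Concatenating such a run with a run from $q'$ labeled $\beta'$ gives a run from $q$ labeled $a\beta'=\beta$.
\item[$\supset$:] Given a run from $q$ whose label is $a\beta'$, cut it right after its first observable event. That event is labeled $a$, the part before it is unobservable, and the remainder is a run from the cut state $q'$ labeled $\beta'$, so $q'\in Z(\beta')$.
\end{itemize}

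Trailing unobservable events cause no trouble, because $Z$ only requires the existence of some run with the given label. The main point of care is exactly this bookkeeping. The inverse observer closes under unobservable events \emph{before} each observable one, as seen along the forward direction. Its initial state is all of $Q$ rather than an unobservable closure, and this is what matches the freedom to stop the forward run anywhere.

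\textbf{Step 2 (combining the two factors).} Unfold definition \eqref{eqn:ByGSE}: $q\in\Mt(\Scal,\alpha,\alpha\beta)$ iff there is a run $q_0\xrightarrow[]{s_1}q\xrightarrow[]{s_2}q_2$ with $q_0\in Q_0$, $\ell(s_1)=\alpha$ and $\ell(s_2)=\beta$. Since runs can be split at $q$ and glued back at $q$, this holds iff two conditions hold simultaneously:
\begin{itemize}
\item[(i)] $q\in\Mt(\Scal,\alpha)$, which by Fact~\ref{fact1:High-OrderOpacity} means $q\in X_1$;
\item[(ii)] $q\in Z(\beta)$, which by Step 1 means $q\in X_2$.
\end{itemize}
Hence $\Mt(\Scal,\alpha,\alpha\beta)=X_1\cap X_2$. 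This also covers the degenerate case $\alpha\beta\notin\ell(L(G))$, where both sides are empty.
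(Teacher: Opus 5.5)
Your proof is correct. The paper gives no proof of this Fact. The justification it relies on is that, by Definition~\ref{FA:def_inv_obs}, $\Obs^{\inv}(\Scal)$ is the observer $\Obs_{\ell}(G^{\inv})$. Applying Fact~\ref{fact1:High-OrderOpacity} to the inverse LFSA, whose initial-state set is all of $Q$, gives $\dt_{\obs}^{\inv}(Q,\beta^R)=\Mt_{\ell}(G^{\inv},\beta^R)$. A run of $G^{\inv}$ from $q'$ to $q$ on $s$ is exactly a run of $G$ from $q$ to $q'$ on $s^R$, so this set is your $Z(\beta)$. Your Step~2, splitting and gluing at the middle state, then finishes the argument.

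Your Step~1 takes a different route: it proves $X_2=Z(\beta)$ directly by induction on $|\beta|$ from the explicit formula in item~3 of Definition~\ref{FA:def_inv_obs}, peeling off the first letter of $\beta$ (the last letter read on $\beta^R$). This amounts to redoing the proof of Fact~\ref{fact1:High-OrderOpacity} for backward reachability. It is longer than the one-line reduction, but it is self-contained. It also avoids two checks the paper leaves implicit: the run-reversal correspondence, and the fact that item~3 coincides with the powerset construction on $G^{\inv}$. That second check holds because the unobservable block gets reversed, which is harmless since $(E_{uo})^*$ is closed under reversal.

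Two small points are worth stating explicitly. First, \eqref{eqn:ByGSE} as printed binds $q$ but names the middle state $q_1$, and you are correctly reading $q=q_1$. Second, the observers only have transitions on letters of $\ell(E_o)$. So for $a\in\Sig\setminus\ell(E_o)$ you should read item~3 literally, which gives $\emptyset=Z(a\beta')$; the same convention is needed for $X_1$.
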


See \autoref{exam12:High-OrderProperty} for an illustration.

\begin{example}\label{exam12:High-OrderProperty} 
  An LSFA $\Scal=(G,\Sig,\ell)$, its observer $\Obs(\Scal)=\Obs_{\ell}(G)$, its inverse
  LFSA $(G^{\inv},\Sig,\ell)$, and its inverse observer $\Obs^{\inv}(\Scal)=\Obs_{\ell}^{\inv}(G)$ are shown in \autoref{FA:fig50}. Then by \autoref{fact1:High-OrderOpacity}
  and \autoref{fact2:High-OrderOpacity}, we have
  \begin{subequations}\label{eqn47:High-OrderProperty}
	\begin{align}
	  \Mt_{\ell}(G,abb) &= \dt_{\obs}(q_{0\obs},abb) = \{q_1\},\label{eqn47-1:High-OrderProperty}\\
	  \Mt_{\ell}(G,abb,abbab) &= \dt_{\obs}(q_{0\obs},abb)\cap \dt_{\obs}^{\inv}(Q,ba)= \{q_1\}\cap \{q_0\} = \emptyset,\label{eqn47-2:High-OrderProperty}\\
	  \Mt_{\ell}(G,abb,abbbb) &= \dt_{\obs}(q_{0\obs},abb)\cap \dt_{\obs}^{\inv}(Q,bb)= \{q_1\}\cap \{q_0,q_1\} = \{q_1\}.\label{eqn47-3:High-OrderProperty}
	\end{align}
  \end{subequations}
  Next we check the above result by definition.
  For $\Scal$, the unique run that produces label sequence $abb$ is 
  \begin{align*}
	q_0 \xrightarrow[]{e_1} q_0 \xrightarrow[]{e_3} q_1 \xrightarrow[]{e_5} q_1, 
  \end{align*}
  we then obtain \eqref{eqn47-1:High-OrderProperty}.

  No run produces label sequence $abbab$, so \eqref{eqn47-2:High-OrderProperty} holds.

  The unique run that produces label sequence $abbbb$ is 
  \begin{align*}
	q_0 \xrightarrow[]{e_1} q_0 \xrightarrow[]{e_3} q_1 \xrightarrow[]{e_5} q_1 \xrightarrow[]{e_5} q_1 \xrightarrow[]{e_5} q_1. 
  \end{align*}
  Also by $\ell(e_1e_3e_5)=abb$ and $\ell(e_1e_3e_5e_5e_5)=abbbb$, we obtain \eqref{eqn47-3:High-OrderProperty}.

	\begin{figure}[!htbp]
		\centering
		\subcaptionbox{LFSA $\Scal=(G,\Sig,\ell)$.\label{FA:fig26'}}{
			\begin{tikzpicture}[>=stealth',shorten >=1pt,auto,node distance=2.2 cm, scale = 0.9, transform shape,
	>=stealth,inner sep=2pt]

	\node[initial, initial where =above, state] (s0) {$q_0$};
	\node[state] (s1) [right of =s0] {$q_1$};
	\node[state] (s2) [left of =s0] {$q_2$};
	\node[state] (s3) [below =0.5cm of s2] {$q_3$};
	
	\path [->]
	(s0) edge [loop below] node [below, sloped] {$e_1(a)$} (s0)
	(s0) edge node [above, sloped] {$e_3(b)$} (s1)
	(s0) edge node [above, sloped] {$e_4(b)$} (s2)
	(s1) edge [loop right] node [above, sloped] {$e_5(b)$} (s1)
	(s0) edge node [above, sloped] {$e_2(\epsilon)$} (s3)
	;

        \end{tikzpicture}
	  }\hspace{0.5cm}
	  \subcaptionbox{Reachable part of observer $\Obs(\Scal)=\Obs_{\ell}(G)$.\label{FA:fig27}}{
	\begin{tikzpicture}[>=stealth',shorten >=1pt,auto,node distance=2.2 cm, scale = 0.9, transform shape,>=stealth,inner sep=2pt]

	  \node[rectangular state,initial,initial where=left] (s0) {$\{q_0,q_3\}$};
	  \node[rectangular state] (s1s2) [right of =s0] {$\{q_1,q_2\}$};
	  \node[rectangular state] (s1) [right of =s1s2] {$\{q_1\}$};
	  \node[rectangular state] (emptys) [right of =s1] {$\emptyset$};

	  \path [->]
	  (s0) edge [loop above] node {$a$} (s0)
	  (s0) edge node [above, sloped] {$b$} (s1s2)
	  ;
	  \path [->]
	  (s1s2) edge node [above, sloped] {$b$} (s1)
	  (s1s2) edge [bend right] node [below, sloped] {$a$} (emptys)
	  ;
	  \path [->]
	  (s1) edge [loop above] node {$b$} (s1)
	  (s1) edge node {$a$} (emptys)
	  (emptys) edge [loop above] node {$a,b$} (emptys)
	  ;
	\end{tikzpicture}
    }

	\subcaptionbox{LFSA $(G^{\inv},\Sig,\ell)$.\label{FA:fig51}}{
			\begin{tikzpicture}[>=stealth',shorten >=1pt,auto,node distance=2.2 cm, scale = 0.9, transform shape,
	>=stealth,inner sep=2pt]

	\node[initial, initial where =above, state] (s0) {$q_0$};
	\node[state] (s1) [right of =s0] {$q_1$};
	\node[state] (s2) [left of =s0] {$q_2$};
	\node[state] (s3) [below =0.5cm of s2] {$q_3$};
	
	\path [->]
	(s0) edge [loop below] node [below, sloped] {$e_1(a)$} (s0)
	(s1) edge node [above, sloped] {$e_3(b)$} (s0)
	(s2) edge node [above, sloped] {$e_4(b)$} (s0)
	(s1) edge [loop right] node [above, sloped] {$e_5(b)$} (s1)
	(s3) edge node [above, sloped] {$e_2(\epsilon)$} (s0)
	;

        \end{tikzpicture}
	  }\hspace{0.5cm}
	  \subcaptionbox{Part of inverse observer $\Obs^{\inv}(\Scal)=\Obs_{\ell}^{\inv}(G)$.\label{FA:fig52}}{
		\begin{tikzpicture}[>=stealth',shorten >=1pt,auto,node distance=2.2 cm, scale = 0.9, transform shape,>=stealth,inner sep=2pt, align=center]

		  \node[rectangular state,initial,initial where=left] (0123) {$\{q_0,q_1,$\\$q_2,q_3\}$};
		  \node[rectangular state] (0) [below left of =0123] {$\{q_0\}$};
		  \node[rectangular state] (01) [below right of =0123] {$\{q_0,q_1\}$};

	  \path [->]
	  (0123) edge node {$a$} (0)
	  (0123) edge node {$b$} (01)
	  (01) edge node {$a$} (0)
	  (0) edge [loop left] node {$a$} (0)
	  (01) edge [loop right] node {$b$} (01)
	  ;
	\end{tikzpicture}
    }
	\caption{} 
	\label{FA:fig50}
	\end{figure} 
\end{example}

\begin{remark}\label{rem5:High-OrderProperty}
  The above checks by definition as in \autoref{exam12:High-OrderProperty} generally cannot be done, because for an arbitrary given LFSA $(G,\Sig,\ell)$, for a label sequence $\alpha\beta$, there may exist infinitely many runs that produce $\alpha\beta$.
\end{remark}

With these preliminaries, we are ready to show the main results.

\section{Order-\texorpdfstring{$2$}{2} bygone-state opacity}
\label{sec:Order2-ByGState-Ooacity}

\subsection{Formulation}

Consider an FSA $G=(Q,E,\dt,Q_0)$, two agents $A_1$ and $A_2$ with observable event sets $E_1\subset E$
and $E_2\subset E$ and labeling functions $\ell_1$ and $\ell_2$, where $\ell_i: E_i\to\Sig_i,
E\setminus E_i\to\{\ep\}$ with $\Sig_i$ an alphabet, $i=1,2$.
As mentioned before, assume both agents know the structure of $G$,
also assume $A_2$ knows $E_1$ and $\ell_1$ but cannot observe events of $E_1\setminus E_2$.
Denote
\begin{subequations}\label{eqn37:High-OrderOpacity} 
  \begin{align}
	(G,\Sig_i,\ell_i) &=: G_{A_i}, & (G^{\inv},\Sig_i,\ell_i) &=: G_{A_i}^{\inv},\\
	\Mt_{\ell_i}(G,\alpha) &=: \Mt_i(G,\alpha), & \Mt_{\ell_i}(G,\alpha,\alpha\beta) &=: \Mt_i(G,\alpha,\alpha\beta), \\
	\Obs_{\ell_i}(G) &=: \Obs_{A_i}, & \Obs_{\ell_i}^{\inv}(G) &=: \Obs_{A_i}^{\inv},
  \end{align}
\end{subequations}
for short, $i=1,2$.
We formulate the definition of order-$2$ bygone-state opacity as follows.

\begin{definition}[Ord2BGSO]\label{def:order2-ByGState-Ooacity}
  An FSA $G=(Q,E,\dt,Q_0)$ is called \emph{order-2 bygone-state opaque with respect to 
  agents $A_1$ and $A_2$} if for every run 
  \begin{align}\label{eqn41:High-OrderProperty}
	q_{01} \xrightarrow[]{s_1} q_1 \xrightarrow[]{s_2} q_2 =: \pi_1
  \end{align}
  with $q_{01}\in Q_0$, if $|\Mt_1(G,\ell_1(s_1),\ell_1(s_1s_2))| = 1$, then
  there exists another run 
  \begin{align}\label{eqn42:High-OrderProperty}
	q_{02} \xrightarrow[]{s_3} q_3 \xrightarrow[]{s_4} q_4 =: \pi_2
  \end{align}
  with $q_{02} \in Q_0$ such that $\ell_2(s_1) = \ell_2 (s_3)$, $\ell_2(s_2) = \ell_2 (s_4)$, 
  and $|\Mt_1(G,\ell_1(s_3),\ell_1(s_3s_4))| > 1$. 
\end{definition}

Intuitively, if automaton $G$ satisfies the order-$2$ bygone-state opacity with respect to agents $A_1$ and $A_2$,
then agent $A_2$ cannot be sure with respect to any (e.g., $\ell_2(s_1s_2)$ in 
\autoref{def:order2-ByGState-Ooacity})
of its observations to $G$, whether $A_1$ can uniquely determine the state of $G$ at the current and any past time.

\subsection{Verification}

In this subsection, we derive a method to verify order-$2$ bygone-state opacity 
for an FSA $G$ with respect to agents $A_1$ and $A_2$. In \autoref{def:order2-ByGState-Ooacity}, the run \eqref{eqn41:High-OrderProperty} is quantified by $\forall$, the run
\eqref{eqn42:High-OrderProperty} is quantified by $\exists$, where the two runs look
the same to agent $A_2$. Therefore, in order to characterize this property, we need to 
synchronize every run of $G$ as in \eqref{eqn41:High-OrderProperty} and a family of runs 
of $G$ as in \eqref{eqn42:High-OrderProperty}, where the family from 
bygone-state estimates of $A_1$ to $G$ according to its own observations to $G$,
as defined in \eqref{eqn:ByGSE}, where the synchronization will be realized by the 
concurrent composition.
We also need to compute for every observation $\alpha\beta$
of $A_2$ to $G$, all possible bygone-state estimates of $A_1$ to $G$ with respect to 
all runs of $G$ that produce $\alpha\beta$ under $\ell_2$. To this end, we need the
\emph{order-$2$ observer} $\Obs_{A_1\leftarrow A_2}(G)$
proposed in \cite{Zhang2024High-OrderPropertyDES}. Note that $\Obs_{A_1\leftarrow A_2}(G)$
can do the above computation when $\beta=\ep$, so, we additionally need to extend
$\Obs_{A_1\leftarrow A_2}(G)$. This extension is nontrivial, because we need to 
construct it from scratch, and the constructing $\Obs_{A_1\leftarrow A_2}(G)$
is only one step in constructing the extension.

Next we construct an extended version of the order-$2$ observer $\Obs_{A_1\leftarrow A_2}(G)$
to verify order-$2$ bygone-state opacity. The construction will be done in a few steps as
follows.

\subsubsection{Step~1}

First, we give a representation for the interleaving of languages generated by two FSAs.

\begin{theorem}\label{thm9:High-OrderCSO}
  Consider two LFSAs $\Scal_i=(G_i,\Sig_i,\ell_i)$, where $G_i=(Q_i,E_i,\dt_i,Q_{0i})$, $i=1,2$. Assume $\ell_i(E_i)=\{\ep\}$, $i=1,2$. Then $L_{\setminus\ep}(\CC(\Scal_1,\Scal_2))
  = L(G_1) ||| L(G_2)$, where $L_{\setminus\ep}(\CC(\Scal_1,\Scal_2))=
  \{e_1\dots e_n| (\exists (e_{1L},e_{1R})\dots(e_{nL},e_{nR}) \in L(\CC(\Scal_1,\Scal_2)))
  [ (\forall 1\le i\le n) [e_{iL}\ne\ep \implies e_i=e_{iL}, e_{iR}\ne\ep \implies
  e_i=e_{iR} ] ]\}$.
\end{theorem}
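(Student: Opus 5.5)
Since every event of both automata is unobservable, the composition $\CC(\Scal_1,\Scal_2)$ in \autoref{FA:def_CCa} has $E_o'=\emptyset$ and only the events $(e_1,\ep)$ with $e_1\in E_1$ and $(\ep,e_2)$ with $e_2\in E_2$. A transition of the first kind moves the first component by $e_1$ in $G_1$ and leaves the second fixed; a transition of the second kind does the symmetric thing. So $\CC(\Scal_1,\Scal_2)$ is just the asynchronous product of $G_1$ and $G_2$. Each letter $(e_{iL},e_{iR})$ of a trace has exactly one non-$\ep$ coordinate, so the condition in the definition of $L_{\setminus\ep}$ fixes $e_i$ uniquely. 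The map that erases $\ep$ coordinates is therefore a well-defined letter-to-letter projection. I will prove both inclusions by direct induction on trace length. Throughout I assume, as the statement implicitly does, that $E_1\cap E_2$ causes no ambiguity: the projection remembers which side each letter came from, and the erased word is the same regardless.

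For $\subset$, take a trace $w=(e_{1L},e_{1R})\dots(e_{nL},e_{nR})$ of $\CC(\Scal_1,\Scal_2)$ with run $(q_0^1,q_0^2)\to\cdots\to(q_n^1,q_n^2)$, where $(q_0^1,q_0^2)\in Q_{01}\times Q_{02}$. Let $i_1<\dots<i_r$ be the indices with $e_{iR}=\ep$, and $j_1<\dots<j_s$ the rest; these partition $\llb 1,n\rrb$. I claim, by induction on $n$, that the first coordinates of the run give a run of $G_1$ labeled $e_{i_1L}\dots e_{i_rL}$ from $q_0^1$ to $q_n^1$. At right-moving steps the first coordinate is constant, and at left-moving steps it follows a transition of $\dt_1$. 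The symmetric claim holds for $G_2$. Hence the erased word $e_1\dots e_n$ has subsequence $e_{i_1}\dots e_{i_r}\in L(G_1)$ and complementary subsequence $e_{j_1}\dots e_{j_s}\in L(G_2)$, which is exactly membership in $L(G_1)\interleave L(G_2)$.

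For $\supset$, take $e_1\dots e_n$ with index partition $\{i_k\},\{j_k\}$ witnessing the two runs $p_0\xrightarrow{e_{i_1}}\cdots\xrightarrow{e_{i_r}}p_r$ in $G_1$ and $p'_0\xrightarrow{e_{j_1}}\cdots\xrightarrow{e_{j_s}}p'_s$ in $G_2$. For each $m\in\llb 0,n\rrb$, let $a_m$ be the number of $i$-indices at most $m$, and $b_m$ the number of $j$-indices at most $m$. Define the state $(p_{a_m},p'_{b_m})$ and the letter $(e_m,\ep)$ if $m$ is an $i$-index, $(\ep,e_m)$ otherwise. The only verification is that consecutive states are linked by the appropriate transition of $\dt'$, which follows directly from the third item of \autoref{FA:def_CCa}. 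The initial state lies in $Q_{01}\times Q_{02}$, so this gives a trace of $\CC(\Scal_1,\Scal_2)$ whose erasure is $e_1\dots e_n$.

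The main obstacle is bookkeeping rather than any idea. One point is that the statement writes $|||$, which I read as the interleaving $\interleave$ defined in \autoref{subsec:notation}. The other is the ambiguity when $E_1\cap E_2\neq\emptyset$ or when an event coincides with $\ep$. The index-partition formulation of $\interleave$ handles the first, since it permits any split, and the counters $a_m,b_m$ make the induction in the second inclusion clean.
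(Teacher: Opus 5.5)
Your proof is correct and is the same definitional argument the paper intends; the paper's entire proof is the remark that the identity ``naturally holds by definition,'' and your two inclusions supply the omitted details. Those are projecting a run of the asynchronous product onto each coordinate, and conversely merging two runs via the counters $a_m,b_m$. Your caveat about $E_1\cap E_2$ is not an extra hypothesis: as you note, the index-partition definition of $\interleave$ and the side-tagging of letters in $\CC(\Scal_1,\Scal_2)$ let the argument go through unchanged for overlapping event sets.
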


\begin{proof}
  This result naturally holds by definition.
\end{proof}

By \autoref{thm9:High-OrderCSO} we have the following:
\begin{theorem}\label{thm10:High-OrderCSO}
  Consider an LFSA $\Scal=(G,\Sig,\ell)$, where $G=(Q,E,\dt,Q_0)$ is as in \eqref{FSA}.
  Consider its observer $\Obs_{\ell}(G) = (Q_{\obs},\ell(E_o),\dt_{\obs},q_{0\obs})$
  as in \eqref{FA:eqn_observer} and its inverse observer 
  $\Obs_{\ell}^{\inv}(G) = (Q_{\obs},\ell(E_o),\dt_{\obs}^{\inv},Q)$ as in 
\eqref{FA:eqn_inv_observer}. Then 
$$L\left( \Obs_{\ell}(G) \right) \interleave
  L\left( \Obs_{\ell}^{\inv}(G) \right) =
  L_{\setminus \ep}\left( \CC_{\ell_{\ep}}\left( \Obs_{\ell}(G),\Obs_{\ell}^{\inv}(G) \right) \right).$$
\end{theorem}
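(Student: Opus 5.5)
The plan is to obtain the claim as a direct specialization of \autoref{thm9:High-OrderCSO}. Both observers are deterministic FSAs over the alphabet $\ell(E_o)$. I will regard them as LFSAs whose events are the labels in $\ell(E_o)$, and equip each with the $\ep$-labeling function $\ell_{\ep}$. Thus I set $\Scal_1=(\Obs_{\ell}(G),\Sig,\ell_{\ep})$ and $\Scal_2=(\Obs_{\ell}^{\inv}(G),\Sig,\ell_{\ep})$, with $G_1=\Obs_{\ell}(G)$ and $G_2=\Obs_{\ell}^{\inv}(G)$. Here $E_1=E_2=\ell(E_o)$, and the hypothesis $\ell_i(E_i)=\{\ep\}$ of \autoref{thm9:High-OrderCSO} holds by construction.

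The first key step is to identify $\CC(\Scal_1,\Scal_2)$ with $\CC_{\ell_{\ep}}(\Obs_{\ell}(G),\Obs_{\ell}^{\inv}(G))$. This is immediate from the notational convention after \autoref{FA:def_CCa}: both components share the event set $\ell(E_o)$ and the labeling function $\ell_{\ep}$.

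Since every event is unobservable, the set $E'_o$ in \autoref{FA:def_CCa} is empty. Hence the composition contains only transitions of the form $(a,\ep)$, which advance $\Obs_{\ell}(G)$ alone, and $(\ep,a)$, which advance $\Obs_{\ell}^{\inv}(G)$ alone. Its initial state is $(q_{0\obs},Q)$. \autoref{thm9:High-OrderCSO} then gives
$$L_{\setminus\ep}\left(\CC_{\ell_{\ep}}\left(\Obs_{\ell}(G),\Obs_{\ell}^{\inv}(G)\right)\right)=L(\Obs_{\ell}(G))\interleave L(\Obs_{\ell}^{\inv}(G)),$$
which is exactly the statement.

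The only point needing care is that \autoref{thm9:High-OrderCSO} was proved by appeal to the definition, so I would briefly recheck the two inclusions in this setting. For one inclusion, take a run of the composition and read off its components in order. Projecting onto the left coordinates gives a run of $\Obs_{\ell}(G)$, and projecting onto the right coordinates gives a run of $\Obs_{\ell}^{\inv}(G)$. The positions of non-$\ep$ left and right entries partition the index set, so the word obtained by dropping $\ep$'s is an interleaving of a word of $L(\Obs_{\ell}(G))$ and a word of $L(\Obs_{\ell}^{\inv}(G))$.

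For the other inclusion, start from runs of the two observers generating $u$ and $v$, and a choice of index sequences $i_1<\dots<i_r$ and $j_1<\dots<j_s$ realizing the interleaving. Build a run of the composition step by step: at position $k$, fire $(e_k,\ep)$ if $k$ is among the $i$'s, and $(\ep,e_k)$ otherwise. The other component stays put, so the pair of current states remains valid at every step.

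I expect no real obstacle. The only subtlety is being precise that the observers, deterministic FSAs over label alphabets, are treated as FSAs whose events are those labels. Also, since $\Obs_{\ell}(G)$ and $\Obs_{\ell}^{\inv}(G)$ are complete deterministic automata (the empty set is a sink state), their languages are all of $\ell(E_o)^*$. This makes the identity easy to sanity-check.
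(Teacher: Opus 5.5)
Your proposal is correct and takes the same approach as the paper, which obtains this result directly as a specialization of \autoref{thm9:High-OrderCSO} by endowing both observers with the $\ep$-labeling function $\ell_{\ep}$; your explicit check of the two inclusions just spells out what the paper leaves to ``by definition.'' Your side remark is also right: both observers are complete deterministic automata with $\emptyset$ as a sink state, so both sides of the identity equal $\ell(E_o)^*$ and the statement is in fact immediate.
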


Intuitively, \autoref{thm10:High-OrderCSO} means that the interleaving of the languages
generated by observer $\Obs_{\ell}(G)$ and by inverse observer $\Obs_{\ell}^{\inv}(G)$
can be computed by the concurrent composition of $\Obs_{\ell}(G)$ and $\Obs_{\ell}^{\inv}(G))$ after endowing them the $\ep$-labeling function $\ell_{\ep}$.

\subsubsection{Step~2}

Second, we recall the order-2 observer and relevant results proposed in \cite{Zhang2024High-OrderPropertyDES}.

For a label sequence $\alpha$ observed by $A_2$, the real generated event sequence can be 
any $s\in \ell_2^{-1}(\alpha)\cap L(G)$, so the observation of $A_1$ can be $\ell_1(s)$
for any such $s$,
and then the inference of $A_1$'s current-state estimate from $A_2$ can be $\Mt_1(G,\ell_1(s))$
for any such $s$.
Formally, when $A_2$ observes $\alpha\in \ell_2(L(G))$, all possible inferences of $A_1$'s current-state estimate of $G$
by $A_2$ are represented by the set
\begin{subequations}\label{eqn:order-2-CSE}
  \begin{align}
	& \Mt_{A_1\leftarrow A_2}(G,\alpha)\label{eqn15_1:High-OrderOpacity}\\
	:= &\{\Mt_1(G,\ell_1(s))|s\in \ell_2^{-1}(\alpha)\cap L(G)\} \subset 2^Q,\label{eqn15_2:High-OrderOpacity}
  \end{align}
\end{subequations}
which is called \emph{order-$2$ current-state estimate of $G$ with respect to agents $A_1$ and $A_2$}.
By definition, $\Mt_{A_1\leftarrow A_2}(G,\alpha)$ must contain $A_1$'s real current-state estimate of $G$.

\begin{definition}[\cite{Zhang2024High-OrderPropertyDES}]\label{def:order2observer}
  Consider FSA $G$ as in \eqref{FSA}, agents $A_1$ and $A_2$ with observable event sets $E_1\subset E$
  and $E_2\subset E$ and labeling functions $\ell_1$ and $\ell_2$.
  Denote LFSAs $(G,\Sig_i,\ell_i)$ by $G_{A_i}$ for short, $i=1,2$. 
\begin{enumerate}
  \item Compute the observer $\Obs_1(G)$ of $G_{A_1}$, and denote $\Obs_1(G)$ as $\Obs_{A_1}$ for short.
  \item Compute the concurrent composition $\CC(G_{A_1},\Obs_{A_1})$ of LFSA $G_{A_1}$ and its observer $\Obs_{A_1}$,
	replace each event $(e_1,e_2)$ by $e_1$, replace the labeling function of $\CC(G_{A_1},\Obs_{A_1})$
	by $\ell_2$, and denote the modification of $\CC(G_{A_1},\Obs_{A_1})$ by $\CC_{A_1\to A_2}^{G,\Obs}$
	which is an LFSA.
  \item Compute the observer $\Obs\left( \CC_{A_1\to A_2}^{G,\Obs} \right)$ of $\CC_{A_1\to A_2}^{G,\Obs}$, and call
	$\Obs\left( \CC_{A_1\to A_2}^{G,\Obs} \right)$ \emph{order-$2$ observer} and denote it as $\Obs_{A_1\leftarrow A_2}
	(G)$, where $A_1\leftarrow A_2$ intuitively describes the scenario $A_2$ infers $A_1$'s current-state estimate to $G$. 
\end{enumerate}
\end{definition}
See \autoref{fig1:order2SEproperty} for an illustration.
It takes doubly exponential time to compute an order-$2$ observer $\Obs_{A_1\leftarrow A_2}(G)$.

\begin{figure}[!htbp]
  \centering
  \begin{tikzpicture}[circuit logic IEC]
	\matrix[column sep=5mm, row sep=0.5cm]
	{
	  \node (SusrR) {$G_{A_1}$}; & & \node (ObsusrR) {$\Obs_{A_1}$};\\
	  & \node (CCusrR) {$\CC$}; & \\
	  & \node (P1to2R) {$\to \ell_2$}; & \\
	  & \node (ObsIntrR) {$\Obs$}; & \\
	};
	\draw (SusrR.south) -- ++(down:2.0mm) -| (CCusrR.north);
	\draw (ObsusrR.south) -- ++(down:2.0mm) -| (CCusrR.north);
	\draw (CCusrR.south) -- (P1to2R);
	\draw (P1to2R) -- (ObsIntrR.north);
  \end{tikzpicture}
  \caption{Construction of the order-$2$ observer $\Obs_{A_1\leftarrow A_2}(G)$ \cite{Zhang2024High-OrderPropertyDES}.}
  \label{fig1:order2SEproperty}
\end{figure}

The next \autoref{thm7:High-OrderCSO} shows several fundamental properties of the order-$2$ observer
$\Obs_{A_1\leftarrow A_2}(G)$, 
and plays a fundamental role in verifying the order-$2$ estimation-based property
\cite{Zhang2024High-OrderPropertyDES}.

\begin{theorem}[\cite{Zhang2024High-OrderPropertyDES}]\label{thm7:High-OrderCSO}
  Consider an FSA $G$ as in \eqref{FSA}, agents $A_1$ and $A_2$ with observable event sets $E_1\subset E$
  and $E_2\subset E$ and labeling functions $\ell_1$ and $\ell_2$,
  and the order-$2$ observer $\Obs_{A_1\leftarrow A_2}(G)$.
  \begin{enumerate}[(i)]
	\item\label{item1:SEproperty}
	  The initial state of 
	  $\Obs_{A_1\leftarrow A_2}(G)$ is in form of
	  $\{(q_{0,1},X_0),\dots,(q_{0,m},X_0)\}$, where 
	  $\{q_{0,1},\dots,q_{0,m}\}=X_0$, $X_0$ is the initial state of $\Obs_{A_1}$.
	\item\label{item2:SEproperty}
	  $L(G) = L(\CC_{A_1\to A_2}^{G,\Obs})$.
	\item\label{item3:SEproperty}
	  For every reachable state $\{(q_1,X_1),\dots,(q_n,X_n)\}$ of $\Obs_{A_1\leftarrow A_2}(G)$, 
	  $q_j\in X_j$, $j\in\llb1, n\rrb$.
	\item\label{item4:SEproperty}
	  For every run $C_0\xrightarrow[]{\alpha} \{(q_1,X_1),\dots,(q_n,X_n)\}$ of 
	  $\Obs_{A_1\leftarrow A_2}(G)$, where $C_0$ is the initial state, 
	  $\{X_1,\dots,X_n\}=\Mt_{A_1\leftarrow A_2}(G,\alpha)$.
  \end{enumerate}
\end{theorem}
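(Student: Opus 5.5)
We prove the four items of \autoref{thm7:High-OrderCSO} by unfolding the three steps of \autoref{def:order2observer}. Throughout we use \autoref{fact1:High-OrderOpacity} for $\Obs_{A_1}$ and the remark after \autoref{FA:def_CCa} that the concurrent composition aggregates exactly the pairs of observationally equivalent runs. Item~(\ref{item2:SEproperty}) comes first, since the other items rest on it.

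\textbf{Item~(\ref{item2:SEproperty}).} $\Obs_{A_1}$ is deterministic, and it is complete over $\ell_1(E_1)$ because the empty set is a state. Consider any run $q_0\xrightarrow[]{s}q$ of $G$ with $q_0\in Q_0$.
\begin{itemize}
\item Each unobservable event $e$ of $s$ (under $\ell_1$) is paired with $\ep$ and leaves the observer component unchanged.
\item Each observable event $e$ is paired with the unique observer event $\ell_1(e)$ leaving the current observer state.
\end{itemize}
This gives a run of $\CC(G_{A_1},\Obs_{A_1})$. Its first components, after the renaming $(e_1,e_2)\mapsto e_1$, spell exactly $s$; this shows $L(G)\subset L(\CC_{A_1\to A_2}^{G,\Obs})$.

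Conversely, projecting any run of $\CC(G_{A_1},\Obs_{A_1})$ onto its first component gives a run of $G$. The renaming keeps the $G$-event, so $L(\CC_{A_1\to A_2}^{G,\Obs})\subset L(G)$.

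The same pairing argument gives an invariant. By \autoref{fact1:High-OrderOpacity}, every run of $\CC(G_{A_1},\Obs_{A_1})$ from an initial state, whose first component is a run $q_0\xrightarrow[]{s}q$ of $G$, ends in $(q,\Mt_1(G,\ell_1(s)))$. Conversely, every such run of $G$ yields a run of the composition ending in $(q,\Mt_1(G,\ell_1(s)))$. This invariant is the key lemma for the remaining items.

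\textbf{Items~(\ref{item1:SEproperty}) and~(\ref{item3:SEproperty}).} For item~(\ref{item1:SEproperty}), the initial states of the composition are $Q_0\times\{X_0\}$, where $X_0$ is the $\ep$-closure of $Q_0$ under $\ell_1$. The initial state of $\Obs_{A_1\leftarrow A_2}(G)$ is the closure of this set under the $\ell_2$-unobservable transitions of $\CC_{A_1\to A_2}^{G,\Obs}$. Here a difficulty appears: an event unobservable to $A_2$ but observable to $A_1$ would change $X_0$. So item~(\ref{item1:SEproperty}) needs the standing assumption $E_1\subset E_2$, or at least that the $\ell_2$-closure does not leave $X_0$. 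We will check how the cited work handles this and state it. Granting that assumption, the $\ell_2$-unobservable events are also $\ell_1$-unobservable. The closure then gives pairs $(q,X_0)$ with $q$ ranging over the $\ell_1$-$\ep$-closure of $Q_0$, which is $X_0$.

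Item~(\ref{item3:SEproperty}) follows from the invariant. Every reachable state of the composition has the form $(q,\Mt_1(G,\ell_1(s)))$, and $q$ lies in this estimate because $s$ itself witnesses it.

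\textbf{Item~(\ref{item4:SEproperty}).} Apply \autoref{fact1:High-OrderOpacity} to the LFSA $\CC_{A_1\to A_2}^{G,\Obs}$ with labeling $\ell_2$. This gives that $C_0\xrightarrow[]{\alpha}C$ implies that $C$ is the set of states of the composition reachable by runs whose event sequence $s$ satisfies $\ell_2(s)=\alpha$. By the invariant and item~(\ref{item2:SEproperty}), these states are exactly the pairs $(q,\Mt_1(G,\ell_1(s)))$ with $s\in\ell_2^{-1}(\alpha)\cap L(G)$ ending in $q$. Projecting onto second components gives $\{\Mt_1(G,\ell_1(s))\mid s\in\ell_2^{-1}(\alpha)\cap L(G)\}=\Mt_{A_1\leftarrow A_2}(G,\alpha)$.

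The main obstacle is the invariant, specifically its surjectivity direction. We need that every $s\in L(G)$ is realised in the composition with a uniquely determined observer component. This uses the determinism and completeness of $\Obs_{A_1}$, which must be argued carefully for events that are observable to $A_1$ but unobservable to $A_2$.
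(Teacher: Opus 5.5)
There is a genuine gap, and it is item (i): that item is false as stated, and your patch does not repair it.

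Your argument for items (ii)--(iv) is sound. Lifting runs of $G$ through the deterministic, complete $\Obs_{A_1}$ and invoking \autoref{fact1:High-OrderOpacity} gives the invariant: a composition run whose first component is $q_0\xrightarrow{s}q$ ends in $(q,\Mt_1(G,\ell_1(s)))$. Items (iii) and (iv) then follow by applying \autoref{fact1:High-OrderOpacity} again to $\CC_{A_1\to A_2}^{G,\Obs}$ under $\ell_2$. The ``main obstacle'' you flag at the end is not one. The invariant involves only $\ell_1$, and its surjective direction is exactly your pairing argument for (ii). Events of $E_1\setminus E_2$ need no extra care there; they matter only in item (i).

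For item (i), your hypothesis $E_1\subset E_2$ is not enough. The faulty step is ``the closure then gives pairs $(q,X_0)$ with $q$ ranging over the $\ell_1$-$\ep$-closure of $Q_0$.'' The initial state of $\Obs_{A_1\leftarrow A_2}(G)$ closes $Q_0\times\{X_0\}$ only under $\ell_2$-unobservable events. Under $E_1\subset E_2$ it is therefore $R\times\{X_0\}$, where $R$ is the closure of $Q_0$ under $E\setminus E_2$. This $R$ can be strictly smaller than $X_0$, the closure under the larger set $E\setminus E_1$, because events of $E_2\setminus E_1$ are silent for $A_1$ but not for $A_2$.

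\begin{itemize}
\item Counterexample under your hypothesis: take $Q_0=\{0\}$, a single transition $0\xrightarrow{a}1$, $E_1=\emptyset$, $E_2=\{a\}$. Then $X_0=\{0,1\}$, but the initial state is $\{(0,\{0,1\})\}$.
\item The paper's own \autoref{exam11:High-OrderProperty} shows the same thing: $X_0=A=\{0,1\}$, while \autoref{fig4:High-OrderCSO} has initial state $Y_0=\{(0,A)\}$, because $a\in E_2\setminus E_1$.
\item As you noticed, events of $E_1\setminus E_2$ also alter the second component: $0\xrightarrow{e}1$ with $E_1=\{e\}$, $E_2=\emptyset$ yields $\{(0,\{0\}),(1,\{1\})\}$.
\end{itemize}

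What is actually true is what your invariant together with \autoref{fact1:High-OrderOpacity} gives at $\alpha=\ep$:
\[
C_0=\{(q,\Mt_1(G,\ell_1(s)))\mid q_0\xrightarrow{s}q \text{ in } G,\ q_0\in Q_0,\ \ell_2(s)=\ep\}.
\]
This reduces to $\{(q,X_0)\mid q\in X_0\}$ under a hypothesis such as $E_1=E_2$, since then the $\ell_1$- and $\ell_2$-unobservable events coincide. So item (i) must either be replaced by this description or carry a hypothesis like $E_1=E_2$. Deferring to ``how the cited work handles this'' leaves it unproved either way.
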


\autoref{thm7:High-OrderCSO} shows that the language generated by the concurrent composition
$\CC_{A_1\to A_2}^{G,\Obs}$ coincides with the language generated by $G$, and
one can use the order-$2$ observer to compute $\Mt_{A_1\leftarrow A_2}(G,\alpha)$.

\subsubsection{Step~3}

In this step, we extend order-$2$ current-state estimate $\Mt_{A_1\leftarrow A_2}(G,\alpha)$ \eqref{eqn:order-2-CSE}
to \emph{order-$2$ bygone-state estimate $\Mt_{A_1\leftarrow A_2}(G,\alpha,\alpha\beta)$}
\eqref{eqn:order-2-ByGSE} just like extending to current-state estimate $\Mt(\Scal,\alpha)$ \eqref{eqn:CSE}
to bygone-state estimate $\Mt(\Scal,\alpha,\alpha\beta)$ \eqref{eqn:ByGSE}.

For a label sequence $\alpha\beta$ observed by agent $A_2$, the real generated event sequence can be 
any $s_1s_2\in \ell_2^{-1}(\alpha\beta)\cap L(G)$ with $\ell_2(s_1)=\alpha$ and $\ell_2(s_1s_2)=\alpha\beta$, so the observation of $A_1$ can be $\ell_1(s_1)\ell_1(s_2)$
for any such $s_1s_2$,
and then the inference of $A_1$'s bygone-state estimate from $A_2$ can be $\Mt_1(G,\ell_1(s_1),\ell_1(s_1s_2))$
for any such $s_1s_2$.
Formally, when $A_2$ observes $\alpha\beta\in \ell_2(L(G))$, all possible inferences of $A_1$'s bygone-state estimates of $G$ by $A_2$ are represented by the set
\begin{subequations}\label{eqn:order-2-ByGSE}
\begin{align}
  & \Mt_{A_1\leftarrow A_2}(G,\alpha,\alpha\beta)\\
  := &\{\Mt_1(G,\ell_1(s_1),\ell_1(s_1s_2))|s_1s_2\in \ell_2^{-1}(\alpha\beta)\cap L(G),
  \ell_2(s_1)=\alpha,\ell_2(s_2)=\beta\} \subset 2^Q,
\end{align}
\end{subequations}
which is called \emph{order-$2$ bygone-state estimate of $G$ with respect to agents $A_1$ and $A_2$ and $\alpha\beta\in\ell(L(G))$}.

By \autoref{fact2:High-OrderOpacity}, we have 
$\Mt(\Scal,\alpha,\alpha\beta)=X\cap Y$, where $X=\dt_{\obs}(q_{0\obs},\alpha)$, $Y=\dt_{\obs}^{\inv}(Q,\beta^R)$.

Denote
\begin{align}\label{eqn43:High-OrderProperty}
\Obs_{A_i} &=: \left( Q_{\obs},\Sig_i,\dt_{\obs}^i,q_{0\obs}^i \right), & \Obs_{A_i}^{\inv} &=: \left( Q_{\obs},\Sig_i,\dt_{\obs}^{i,\inv},Q \right),
\end{align}
where $i=1,2$, recall that $\Obs_{A_i}$ and $\Obs_{A_i}^{\inv}$ are the observer and the inverse observer of LFSA $(G,\Sig_i,\ell_i)$, respectively.

Next, we construct a tool to compute order-$2$ bygone-state estimate $\Mt_{A_1\leftarrow A_2}(G,\alpha,\alpha\beta)$ \eqref{eqn:order-2-ByGSE}.

\begin{enumerate}[(1)]
  \item\label{item1:High-OrderProperty}
	Compute concurrent compositions $\CC_{\ell_1}(G,\Obs_{A_1})$ and $\CC_{\ell_1}(G^{\inv},\Obs_{A_1}^{\inv})$, replace each event $(e_1,e_2)$ of these two concurrent compositions by $e_1$, and then replace the labeling functions of these two concurrent compositions by $\ell_2$, and denote the modified concurrent compositions by $\CC_{A_1\to A_2}^{G,\Obs}$ and $\CC_{A_1\to A_2}^{G^{\inv},\Obs^{\inv}}$, respectively.
  \item\label{item2:High-OrderProperty}
	Compute the observer $\Obs\left( \CC_{A_1\to A_2}^{G,\Obs} \right)$ of $\CC_{A_1\to A_2}^{G,\Obs}$, compute
	the observer $\Obs\left( \CC_{A_1\to A_2}^{G^{\inv},\Obs^{\inv}} \right)$
	of $\CC_{A_1\to A_2}^{G^{\inv},\Obs^{\inv}}$, 
	where recall that $\Obs\left( \CC_{A_1\to A_2}^{G,\Obs} \right)$ is the order-$2$ observer and is denoted as
	$\Obs_{A_1\leftarrow A_2}(G)$
	for short. Denote $\Obs\left( \CC_{A_1\to A_2}^{G^{\inv},\Obs^{\inv}} \right)
	=: \Obs_{A_1\leftarrow A_2}^{\inv} (G)$ for short, and call it \emph{order-$2$ inverse observer}.
\end{enumerate}

Denote
\begin{subequations}\label{eqn44:High-OrderProperty}
  \begin{align}
	\Obs_{A_1\leftarrow A_2} (G)
	&= \left( 2^{Q\times 2^Q}, \Sig_2, \dt_{\Obs,A_1\leftarrow A_2}^{G,\Obs}, q_{0\obs,A_1\leftarrow A_2}^{G,\Obs}\right),\\
	\Obs_{A_1\leftarrow A_2}^{\inv} (G) &= \left( 2^{Q\times 2^Q}, \Sig_2, \dt_{\Obs,A_1\leftarrow A_2}^{G^{\inv},\Obs^{\inv}}, q_{0\obs,A_1\leftarrow A_2}^{G^{\inv},\Obs^{\inv}}\right),
  \end{align}
\end{subequations}
where $q_{0\obs,A_1\leftarrow A_2}^{G,\Obs} = \left\{ \left( q,q_{0\obs}^1 \right) \left| q\in q_{0\obs}^1 \right. \right\}$, 
$q_{0\obs,A_1\leftarrow A_2}^{G^{\inv},\Obs^{\inv}} = \{ (q,Q) | q\in Q\}$,
recall that $q_{0\obs}^1$ is the initial state of observer $\Obs_{A_1}$,
$Q$ is the initial state of inverse observer $\Obs_{A_1}^{\inv}$.

Based on this construction, we have the following result for computing \eqref{eqn:order-2-ByGSE}.

\begin{theorem}\label{thm11:High-OrderProperty}
  Consider an FSA $G$ as in \eqref{FSA}, agents $A_1$ and $A_2$ with observable event sets $E_1\subset E$
  and $E_2\subset E$ and labeling functions $\ell_1$ and $\ell_2$, 
  and consider two observers $\Obs_{A_1\leftarrow A_2} (G)$ and
  $\Obs_{A_1\leftarrow A_2}^{\inv} (G)$.
  Consider label sequence $\alpha\beta\in\Sig_2^*\cap\ell_2(L(G))$ observed by agent $A_2$. Then
  \begin{subequations}\label{eqn45:High-OrderProperty}
  \begin{align}
	&\Mt_{A_1\leftarrow A_2}(G,\alpha,\alpha\beta)\\
	=& \{X_1\cap X_2|(\exists q\in Q)\\
	  &\left[(q,X_1) \in \dt_{\Obs,A_1\leftarrow A_2}^{G,\Obs} \left( q_{0\obs,A_1\leftarrow A_2}^{G,\Obs},\alpha \right)  \wedge\right.\\
	  & \left.\left. (q,X_2) \in \dt_{\Obs,A_1\leftarrow A_2}^{G^{\inv},\Obs^{\inv}} \left( q_{0\obs,A_1\leftarrow A_2}^{G^{\inv},\Obs^{\inv}}, \beta^R \right) \right]\right\}.
  \end{align}
  \end{subequations}
\end{theorem}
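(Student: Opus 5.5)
The plan is to prove the two inclusions of \eqref{eqn45:High-OrderProperty} by first characterizing exactly which pairs $(q,X)$ occur in the states of the two order-$2$ observers reached along $\alpha$ and $\beta^R$. Once those characterizations are in hand, the two runs are glued at the common state $q$ and \autoref{fact2:High-OrderOpacity} is applied to agent $A_1$.

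\textbf{Step 1 (forward characterization).} I would show that $(q,X)\in \dt_{\Obs,A_1\leftarrow A_2}^{G,\Obs}(q_{0\obs,A_1\leftarrow A_2}^{G,\Obs},\alpha)$ holds if and only if there is a run $q_0\xrightarrow[]{s_1}q$ of $G$ with $q_0\in Q_0$, $\ell_2(s_1)=\alpha$ and $X=\dt_{\obs}^1(q_{0\obs}^1,\ell_1(s_1))=\Mt_1(G,\ell_1(s_1))$. This is essentially the refined content of the proof of \autoref{thm7:High-OrderCSO}\eqref{item4:SEproperty}. A run of $\CC_{\ell_1}(G,\Obs_{A_1})$ ending in $(q,X)$ projects, after replacing $(e_1,e_2)$ by $e_1$, to a run of $G$ ending in $q$. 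Along it the observer component has read exactly $\ell_1(s_1)$, because $\Obs_{A_1}$ is deterministic and moves only on $A_1$-observable events. \autoref{fact1:High-OrderOpacity} then identifies $X$. Finally, the outer observer with labeling $\ell_2$ collects precisely the end states of all runs of $\CC_{A_1\to A_2}^{G,\Obs}$ whose $\ell_2$-label is $\alpha$, again by \autoref{fact1:High-OrderOpacity}.

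\textbf{Step 2 (inverse characterization).} I would prove the analogous statement for the inverse observer: $(q,Y)\in \dt_{\Obs,A_1\leftarrow A_2}^{G^{\inv},\Obs^{\inv}}(q_{0\obs,A_1\leftarrow A_2}^{G^{\inv},\Obs^{\inv}},\beta^R)$ holds if and only if there is a run $q\xrightarrow[]{s_2}q'$ of $G$ (with arbitrary $q'$) such that $\ell_2(s_2)=\beta$ and $Y=\dt_{\obs}^{1,\inv}(Q,\ell_1(s_2)^R)$. The argument is the same as in Step 1, applied to $G^{\inv}$ with initial set $Q$. The point is that a run of $G^{\inv}$ from $q'$ to $q$ under $s_2^R$ is exactly a run of $G$ from $q$ to $q'$ under $s_2$, and reversal commutes with both $\ell_1$ and $\ell_2$. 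I also need the inverse-observer analogue of \autoref{fact1:High-OrderOpacity}, namely $\dt_{\obs}^{1,\inv}(Q,\gamma^R)=\{p\in Q\mid \exists\text{ run } p\xrightarrow[]{s}\cdot,\ \ell_1(s)=\gamma\}$. This follows by induction on $|\gamma|$ from \autoref{FA:def_inv_obs}, and it is also what underlies \autoref{fact2:High-OrderOpacity}.

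\textbf{Step 3 (gluing).} For ``$\subset$'': take $s_1s_2\in L(G)$ with $\ell_2(s_1)=\alpha$ and $\ell_2(s_2)=\beta$, and split the run at the intermediate state $q$. Steps 1 and 2 give $(q,X_1)$ and $(q,X_2)$ in the two reached observer states, with $X_1=\dt_{\obs}^1(q_{0\obs}^1,\ell_1(s_1))$ and $X_2=\dt_{\obs}^{1,\inv}(Q,\ell_1(s_2)^R)$. \autoref{fact2:High-OrderOpacity}, applied to $(G,\Sig_1,\ell_1)$ with $\ell_1(s_1)$ and $\ell_1(s_2)$, then yields $X_1\cap X_2=\Mt_1(G,\ell_1(s_1),\ell_1(s_1s_2))$. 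For ``$\supset$'': given a common $q$, Steps 1 and 2 provide runs $q_0\xrightarrow[]{s_1}q$ and $q\xrightarrow[]{s_2}q'$. Their concatenation is a trace $s_1s_2$ of $G$ with the required $\ell_2$-labels, and the same fact shows that $X_1\cap X_2$ is the corresponding element of $\Mt_{A_1\leftarrow A_2}(G,\alpha,\alpha\beta)$.

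The main obstacle is Step 2. There I must check carefully that the $\ep$-closure conventions of $\Obs_{A_1}^{\inv}$ (unobservable events preceding the observable one in $G$) and of the outer observer over $\CC_{A_1\to A_2}^{G^{\inv},\Obs^{\inv}}$ line up under reversal. In particular, trailing $\ell_1$-unobservable events of $s_2$ at the end $q'$, and leading ones at $q$, must not change $Y$. Since the initial state is all of $Q$ and transitions prepend unobservable prefixes, I expect this to hold. I would verify it by induction on the length of $s_2$, treating single events that are unobservable for $A_1$ but observable for $A_2$ (and vice versa) as the separate cases.
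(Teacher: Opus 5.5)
Your proposal is correct and follows essentially the same route as the paper. The paper also characterizes the pairs $(q,X_1)$ collected by $\Obs_{A_1\leftarrow A_2}(G)$ along $\alpha$ and the pairs $(q,X_2)$ collected by $\Obs_{A_1\leftarrow A_2}^{\inv}(G)$ along $\beta^R$, glues the two runs of $G$ at the common state $q$, and applies \autoref{fact2:High-OrderOpacity} to $(G,\Sig_1,\ell_1)$. Your version is more careful than the paper's sketch, since you correctly feed $\ell_1(s_1)$ and $\ell_1(s_2)^R$ rather than $\alpha,\beta^R$ into \autoref{fact2:High-OrderOpacity}; your Step~2 worry is harmless, because the observer component of the composition is a deterministic automaton reading $\ell_1$ of the event sequence, so $Y$ depends only on $\ell_1(s_2)$.
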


\begin{proof}
  By definition, for a subset $X\subset Q$, $X\in \Mt_{A_1\leftarrow A_2}(G,\alpha,\alpha\beta)$ if and only if
  there exists $s_1s_2\in L(G)$ such that $\ell_2(s_1)=\alpha$, $\ell_2(s_1s_2) = \alpha\beta$, and 
  $X=\Mt_1(G,\ell_1(s_1),\ell_1(s_1s_2))$. By \autoref{fact2:High-OrderOpacity}, $X=X_1\cap X_2$, where
  $X_1=\dt_{\obs}^1(q_{0\obs}^1,\alpha)$, $X_2=\dt_{\obs}^{1,\inv}(Q,\beta^R)$. Then $X\in \Mt_{A_1\leftarrow A_2}
  (G,\alpha,\alpha\beta)$ if and only if there exists a run 
  \begin{align}
	q_0 \xrightarrow[]{s_1} q_1 \xrightarrow[]{s_2} q_2 
  \end{align}
  in $G$ with $q_0 \in Q_0$ such that $\ell_2(s_1)=\alpha$, $\ell_2(s_1s_2) = \alpha\beta$, and
  $X = X_1 \cap X_2$, where $X_1$ and $X_2$ are as above. This also implies $q_1\in X$. All such pairs $(q_1,X_1)$
  can be computed by order-$2$ observer $\Obs_{A_1\leftarrow A_2} (G)$ via 
  $\dt_{\Obs,A_1\leftarrow A_2}^{G,\Obs} (q_{0\obs,A_1\leftarrow A_2}^{G,\Obs},\alpha)$, 
  all such pairs $(q_1,X_2)$ can be computed by order-$2$ inverse observer $\Obs_{A_1\leftarrow A_2}^{\inv} (G)$ via
  $\dt_{\Obs,A_1\leftarrow A_2}^{G^{\inv},\Obs^{\inv}} (q_{0\obs,A_1\leftarrow A_2}^{G^{\inv},\Obs^{\inv}}, \beta^R)$.
  Then, \eqref{eqn45:High-OrderProperty} holds.
\end{proof}

\begin{remark}\label{rem2:High-OrderOpacity}
  In \autoref{thm11:High-OrderProperty}, \eqref{eqn45:High-OrderProperty} provides a doubly-exponential-time algorithm
  for computing order-$2$ bygone-state estimate $\Mt_{A_1\leftarrow A_2}(G,\alpha,\alpha\beta)$.
\end{remark}

\subsubsection{Step~4}

In this step, we derive a tool to verify order-$2$ bygone-state opacity. The tool will be constructed based 
on observers $\Obs_{A_1\leftarrow A_2} (G)$ and $\Obs_{A_1\leftarrow A_2}^{\inv} (G)$,
\autoref{thm10:High-OrderCSO},
and \autoref{thm11:High-OrderProperty}.

\begin{enumerate}[(1)]
  \setcounter{enumi}{2}
  \item\label{item3:High-OrderProperty}
	Compute concurrent composition $\CC_{\ell_{\ep}}\left(\Obs_{A_1\leftarrow A_2} (G),
	\Obs_{A_1\leftarrow A_2}^{\inv} (G) \right)$, where recall that $\ell_{\ep}$ is the 
	$\ep$-labeling function. (Then the state of the newly computed concurrent composition is in form of
	$(Y_1,Y_2)$, where $Y_1,Y_2\subset Q\times 2^Q$.) Change each state $(Y_1,Y_2)$ of the concurrent composition 
	to $Y_1\Cap Y_2$ (see \autoref{subsec:notation}). Denote
	the modified concurrent composition by 
	\begin{subequations}\label{eqn46:High-OrderProperty}
	  \begin{align}
		&\overline{\CC}_{\ell_{\ep}}\left(\Obs_{A_1\leftarrow A_2} (G), \Obs_{A_1\leftarrow A_2}^{\inv} (G) \right)\\
		=: &\left( \Ysf, \bar\Sig_2,\dt_{\Ysf},\ysf_0 ,\{\ep\},\ell_{\ep}\right),
	  \end{align}
	\end{subequations}
	where $\bar\Sig_2 = (\Sig_2\times\{\ep\}) \cup (\{\ep\}\times\Sig_2)$, $\ysf_0 = \left(q_{0\obs,A_1\leftarrow A_2}^{G,\Obs}, q_{0\obs,A_1\leftarrow A_2}^{G^{\inv},\Obs^{\inv}} \right)$.
\end{enumerate}

By definition, one sees that for every $Y\in\Ysf$ and every $\sigma_{\Ysf} \in
\bar\Sig_2$, $|\dt_{\Ysf} (Y,\sigma_{\Ysf})| \le 1$. 
Then in the sequel, we will use the form of a partial function 
$\dt_{\Ysf}: \Ysf\times \bar\Sig_2 \to \Ysf$ instead of its
set-valued function form as in \autoref{FA:def_CCa}.

\begin{remark}\label{rem3:High-OrderProperty}
  It takes doubly exponential time to compute concurrent composition \eqref{eqn46:High-OrderProperty}.
\end{remark}

\begin{theorem}\label{thm12:High-OrderProperty}
  Consider an FSA $G$ as in \eqref{FSA}, agents $A_1$ and $A_2$ with observable event sets $E_1\subset E$
  and $E_2\subset E$ and labeling functions $\ell_1$ and $\ell_2$, and consider the concurrent composition
  $$\overline{\CC}_{\ell_{\ep}}\left(\Obs_{A_1\leftarrow A_2} (G), \Obs_{A_1\leftarrow A_2}^{\inv} (G) \right).$$
  The following hold.
  \begin{enumerate}[(i)]
	\item\label{item7:High-OrderProperty}  
	  For every two event sequences $s_{\Ysf}^1,s_{\Ysf}^2\in (\bar\Sig_2)^*$, if
	  $s_{\Ysf}^1(L) = s_{\Ysf}^2(L)$ and $s_{\Ysf}^1(R) = s_{\Ysf}^2(R)$, 
	  then $\dt_{\Ysf} (\ysf_0, s_{\Ysf}^1) = \dt_{\Ysf} (\ysf_0, s_{\Ysf}^1)$.
	\item\label{item4:High-OrderProperty}
	  For every $s_{\Ysf} \in (\bar\Sig_2)^*$, 
	  $\dt_{\Ysf} (\ysf_0,s_{\Ysf}) = \Mt_{A_1\leftarrow A_2}(G,\alpha,\alpha\beta)$, where $\alpha=s_{\Ycal}(L)$, $\beta=s_{\Ycal}(R)^R$.
	\item\label{item5:High-OrderProperty}
	  $L_{\setminus\ep} \left( \overline{\CC}_{\ell_{\ep}}\left(\Obs_{A_1\leftarrow A_2} (G), \Obs_{A_1\leftarrow A_2}^{\inv} (G) \right) \right) = L\left( \Obs_{A_1\leftarrow A_2} (G) \right) \interleave L\left( \Obs_{A_1\leftarrow A_2}^{\inv} (G) \right) $.
	\item\label{item6:High-OrderProperty}
	  The set of the reachable states of concurrent composition $\overline{\CC}_{\ell_{\ep}}\left(\Obs_{A_1\leftarrow A_2} (G), \Obs_{A_1\leftarrow A_2}^{\inv} (G) \right)$ coincides with the set 
	  $\{\Mt_{A_1\leftarrow A_2}(G,\alpha,\alpha\beta)|\alpha\beta\in\Sig_2^*\}$
	  of order-$2$ bygone-state estimates.
  \end{enumerate}
\end{theorem}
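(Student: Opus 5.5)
The plan is to prove the four items in the order (\ref{item7:High-OrderProperty}), (\ref{item5:High-OrderProperty}), (\ref{item4:High-OrderProperty}), (\ref{item6:High-OrderProperty}). Everything rests on one structural remark about the unmodified concurrent composition $\CC_{\ell_{\ep}}(\Obs_{A_1\leftarrow A_2}(G),\Obs_{A_1\leftarrow A_2}^{\inv}(G))$. Both components are deterministic, and under $\ell_{\ep}$ every event is unobservable, so by \autoref{FA:def_CCa} only moves of the forms $(\sigma,\ep)$ and $(\ep,\sigma)$ exist. A move $(\sigma,\ep)$ advances only the left component, and $(\ep,\sigma)$ advances only the right one.

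Hence, by induction on $|s_{\Ysf}|$, for every $s_{\Ysf}\in(\bar\Sig_2)^*$ for which the run is defined, the state reached from $\ysf_0$ before the modification is the pair
\[
\left(\dt_{\Obs,A_1\leftarrow A_2}^{G,\Obs}\left(q_{0\obs,A_1\leftarrow A_2}^{G,\Obs},s_{\Ysf}(L)\right),\ \dt_{\Obs,A_1\leftarrow A_2}^{G^{\inv},\Obs^{\inv}}\left(q_{0\obs,A_1\leftarrow A_2}^{G^{\inv},\Obs^{\inv}},s_{\Ysf}(R)\right)\right).
\]
Item (\ref{item7:High-OrderProperty}) follows at once: this pair depends only on $s_{\Ysf}(L)$ and $s_{\Ysf}(R)$, and the modified state is the image of the pair under the map $(Y_1,Y_2)\mapsto Y_1\Cap Y_2$. (The statement contains an evident typo; it should read $\dt_{\Ysf}(\ysf_0,s^1_{\Ysf})=\dt_{\Ysf}(\ysf_0,s^2_{\Ysf})$.)

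One subtlety needs care: the modification may identify distinct pairs. I would treat $\dt_{\Ysf}$ as the transition function inherited from the pair-level automaton, with each state relabeled by $Y_1\Cap Y_2$, rather than as a quotient automaton. Then $\dt_{\Ysf}$ remains well defined as a partial function, and it is defined exactly when both component runs are defined. Under this reading, item (\ref{item5:High-OrderProperty}) is \autoref{thm9:High-OrderCSO} applied to the two order-$2$ observers viewed as LFSAs with $\ell_{\ep}$, exactly as in \autoref{thm10:High-OrderCSO}. The relabeling of states does not change the language.

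For item (\ref{item4:High-OrderProperty}), set $\alpha=s_{\Ysf}(L)$ and $\beta^R=s_{\Ysf}(R)$. The state $Y_1\Cap Y_2$ is then exactly the right-hand side of \eqref{eqn45:High-OrderProperty}, so \autoref{thm11:High-OrderProperty} gives $\dt_{\Ysf}(\ysf_0,s_{\Ysf})=\Mt_{A_1\leftarrow A_2}(G,\alpha,\alpha\beta)$.

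The main obstacle is the boundary case in which $\alpha\beta\notin\ell_2(L(G))$, which \autoref{thm11:High-OrderProperty} excludes. Here I would argue directly. By \autoref{thm7:High-OrderCSO}(\ref{item2:SEproperty}), together with the fact that $\CC_{\ell_1}(G^{\inv},\Obs_{A_1}^{\inv})$ preserves $L(G^{\inv})$ in the same way, there is then no $q$ with $(q,X_1)\in Y_1$ and $(q,X_2)\in Y_2$ whose underlying runs glue into a run of $G$. I need to verify that the components cannot share a state $q$ unless such a gluing exists. If they share $q$, there are a run from $Q_0$ to $q$ producing $\alpha$ under $\ell_2$ and a run of $G^{\inv}$ from some state to $q$ producing $\beta^R$ under $\ell_2$. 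Reversing the second gives a run of $G$ from $q$ producing $\beta$, so concatenating the two runs shows $\alpha\beta\in\ell_2(L(G))$, a contradiction. Therefore $Y_1\Cap Y_2=\{\emptyset\}$ by the convention in \autoref{subsec:notation}, and I would check that this agrees with the intended value of $\Mt_{A_1\leftarrow A_2}$ (interpreting the empty-run case consistently), or else restrict item (\ref{item4:High-OrderProperty}) to $\alpha\beta\in\ell_2(L(G))$. The observers are complete, since $\emptyset$ is a state, so $\dt_{\Ysf}$ is in fact total.

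Finally, item (\ref{item6:High-OrderProperty}) follows from item (\ref{item4:High-OrderProperty}). Every reachable state has the form $\dt_{\Ysf}(\ysf_0,s_{\Ysf})$, which gives one inclusion. Conversely, for any pair $\alpha,\beta$, the sequence obtained by taking the letters of $\alpha$ as $(\sigma,\ep)$-moves followed by the letters of $\beta^R$ as $(\ep,\sigma)$-moves reaches $\Mt_{A_1\leftarrow A_2}(G,\alpha,\alpha\beta)$, which gives the other.
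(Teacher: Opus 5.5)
Your proposal is correct and follows essentially the paper's own route: (i) comes from the determinism of the two order-$2$ observers, since each move $(\sigma,\ep)$ or $(\ep,\sigma)$ advances only one component; (iii) comes from \autoref{thm9:High-OrderCSO}; (ii) comes from (i) plus \autoref{thm11:High-OrderProperty}; and (iv) comes from (ii) together with (iii), where your totality argument does the job of (iii). Your extra care on the boundary case goes beyond the paper, which applies \autoref{thm11:High-OrderProperty} to all $\alpha\beta$ without comment, and it is needed: when $\alpha\beta\notin\ell_2(L(G))$ the definition gives $\Mt_{A_1\leftarrow A_2}(G,\alpha,\alpha\beta)=\emptyset$ whereas the $\Cap$ convention gives $\{\emptyset\}$, so the check you propose fails, and (ii) and (iv) hold only with your restriction to $\alpha\beta\in\ell_2(L(G))$ or with $\emptyset$ identified with $\{\emptyset\}$ (harmless for \autoref{thm13:High-OrderProperty}, since neither set contains a singleton).
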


\begin{proof}
  \eqref{item7:High-OrderProperty}: This holds by definition of concurrent composition and because the two observers $\Obs_{A_1\leftarrow A_2} (G)$ and $\Obs_{A_1\leftarrow A_2}^{\inv}$ are both deterministic automata.

  \eqref{item4:High-OrderProperty}:
  By item~\eqref{item7:High-OrderProperty} and \autoref{thm11:High-OrderProperty}, item~\eqref{item4:High-OrderProperty} holds.

  \eqref{item5:High-OrderProperty}:
  By \autoref{thm9:High-OrderCSO}, one has $$L_{\setminus\ep} \left( \CC_{\ell_{\ep}}\left(\Obs_{A_1\leftarrow A_2} (G), \Obs_{A_1\leftarrow A_2}^{\inv} (G) \right) \right) = L\left( \Obs_{A_1\leftarrow A_2} (G) \right) \interleave L\left( \Obs_{A_1\leftarrow A_2}^{\inv} (G) \right).$$ Also because $$L_{\setminus\ep} \left( \overline{\CC}_{\ell_{\ep}}\left(\Obs_{A_1\leftarrow A_2} (G), \Obs_{A_1\leftarrow A_2}^{\inv} (G) \right) \right) = L_{\setminus\ep} \left( \CC_{\ell_{\ep}}\left(\Obs_{A_1\leftarrow A_2} (G), \Obs_{A_1\leftarrow A_2}^{\inv} (G) \right) \right),$$
  item~\eqref{item5:High-OrderProperty} holds.

  \eqref{item6:High-OrderProperty}:
  Item~\eqref{item4:High-OrderProperty} and item~\eqref{item5:High-OrderProperty} imply 
  item~\eqref{item6:High-OrderProperty}.
\end{proof}

Item~\eqref{item7:High-OrderProperty} tells that although there may exist multiple event sequences such that their left components (e.g., $s_{\Ysf}^1(L)$) are the same and their right components (e.g., $s_{\Ysf}^1(R)$) are the same, these event sequences lead the concurrent composition $\overline{\CC}_{\ell_{\ep}}\left(\Obs_{A_1\leftarrow A_2} (G), \Obs_{A_1\leftarrow A_2}^{\inv} (G) \right)$ to the same state which is exactly the order-$2$ bygone-state estimate $\Mt_{A_1\leftarrow A_2}(G,s_{\Ysf}^1(L),s_{\Ysf}^1(L)s_{\Ysf}^1(R)^R)$.

\autoref{thm12:High-OrderProperty} implies that the concurrent composition 
$\overline{\CC}_{\ell_{\ep}}\left(\Obs_{A_1\leftarrow A_2} (G), \Obs_{A_1\leftarrow A_2}^{\inv} (G) \right)$
can compute order-$2$ bygone-state estimates, and can characterize all order-$2$ bygone-state estimates.
Based on these conclusions, $\overline{\CC}_{\ell_{\ep}}\left(\Obs_{A_1\leftarrow A_2} (G), \Obs_{A_1\leftarrow A_2}^{\inv} (G) \right)$ 
can verify order-$2$ bygone-state opacity.

\begin{theorem}\label{thm13:High-OrderProperty}
  Consider an FSA $G$ as in \eqref{FSA}, agents $A_1$ and $A_2$ with observable event sets $E_1\subset E$
  and $E_2\subset E$ and labeling functions $\ell_1$ and $\ell_2$, and consider the concurrent composition
  $$\overline{\CC}_{\ell_{\ep}}\left(\Obs_{A_1\leftarrow A_2} (G),\Obs_{A_1\leftarrow A_2}^{\inv} (G) \right)$$
  as in \eqref{eqn46:High-OrderProperty}.
  FSA $G$ is order-$2$ bygone-state opaque with respect to agents $A_1$ and $A_2$ if and only if
  for every reachable
  state $Y$ of concurrent composition \eqref{eqn46:High-OrderProperty}, if there is $X_1\in Y$ with $|X_1|=1$,
  then there exists $X_2\in Y$ such that $|X_2|>1$.
\end{theorem}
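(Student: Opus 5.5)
The plan is to reduce the statement to \autoref{thm12:High-OrderProperty}\eqref{item6:High-OrderProperty}. By that result, the reachable states of \eqref{eqn46:High-OrderProperty} are exactly the sets $\Mt_{A_1\leftarrow A_2}(G,\alpha,\alpha\beta)$ with $\alpha\beta\in\Sig_2^*$. The right-hand condition of the theorem is therefore equivalent to the following condition (*):
\begin{quote}
for all $\alpha,\beta\in\Sig_2^*$, if some $X\in\Mt_{A_1\leftarrow A_2}(G,\alpha,\alpha\beta)$ has $|X|=1$, then some $X'\in\Mt_{A_1\leftarrow A_2}(G,\alpha,\alpha\beta)$ has $|X'|>1$.
\end{quote}
It then remains to show that (*) is equivalent to \autoref{def:order2-ByGState-Ooacity}. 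Both directions follow by unfolding \eqref{eqn:order-2-ByGSE}.

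For ``opaque $\Rightarrow$ (*)'', take a reachable $Y=\Mt_{A_1\leftarrow A_2}(G,\alpha,\alpha\beta)$ and $X_1\in Y$ with $|X_1|=1$. By \eqref{eqn:order-2-ByGSE} there is $s_1s_2\in L(G)$ with $\ell_2(s_1)=\alpha$, $\ell_2(s_2)=\beta$ and $X_1=\Mt_1(G,\ell_1(s_1),\ell_1(s_1s_2))$. This gives a run $\pi_1$ as in \eqref{eqn41:High-OrderProperty} satisfying the hypothesis of the definition. The definition then supplies a run $\pi_2$ with $\ell_2(s_3)=\alpha$, $\ell_2(s_4)=\beta$ and $|\Mt_1(G,\ell_1(s_3),\ell_1(s_3s_4))|>1$. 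By \eqref{eqn:order-2-ByGSE} this set lies in $Y$, so it serves as $X_2$.

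For ``(*) $\Rightarrow$ opaque'', take a run $\pi_1$ with $|\Mt_1(G,\ell_1(s_1),\ell_1(s_1s_2))|=1$. Put $\alpha=\ell_2(s_1)$ and $\beta=\ell_2(s_2)$. Then $\alpha\beta\in\ell_2(L(G))$, and that singleton belongs to $\Mt_{A_1\leftarrow A_2}(G,\alpha,\alpha\beta)$. By \autoref{thm12:High-OrderProperty}\eqref{item6:High-OrderProperty} this set is a reachable state of \eqref{eqn46:High-OrderProperty}; concretely, it is reached by a word whose left components spell $\alpha$ and whose right components spell $\beta^R$, using \eqref{item4:High-OrderProperty}. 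Condition (*) gives $X_2$ in this set with $|X_2|>1$. Unfolding \eqref{eqn:order-2-ByGSE} once more yields $s_3s_4\in L(G)$ with $\ell_2(s_3)=\alpha$, $\ell_2(s_4)=\beta$ and $X_2=\Mt_1(G,\ell_1(s_3),\ell_1(s_3s_4))$. This $s_3s_4$ gives the required run $\pi_2$.

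The main obstacle I expect is the bookkeeping around \autoref{thm12:High-OrderProperty}\eqref{item6:High-OrderProperty}, namely matching reachable states exactly with order-$2$ bygone-state estimates. When $\alpha\beta\notin\ell_2(L(G))$, the estimate is empty, and the reachable state may be $\emptyset$ or $\{\emptyset\}$ under the $\Cap$ convention. In either case it contains no singleton, so the condition holds vacuously and these states are harmless. I would state this explicitly. I would also note that $\Cap$ faithfully realizes \autoref{thm11:High-OrderProperty}, so that every member of a reachable $Y$ genuinely arises from some pair $(s_1,s_2)$. With that in place, the proof consists only of the two definitional unfoldings above.
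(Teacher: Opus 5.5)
Your proposal is correct and follows the paper's own argument. Like the paper, you unfold \autoref{def:order2-ByGState-Ooacity} and use \autoref{thm12:High-OrderProperty}\eqref{item6:High-OrderProperty} to identify the reachable states of \eqref{eqn46:High-OrderProperty} with the order-$2$ bygone-state estimates; you just write out both directions explicitly and handle the degenerate state $\{\emptyset\}$ from the $\Cap$ convention when $\alpha\beta\notin\ell_2(L(G))$, which the paper leaves implicit.
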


\begin{proof}
  By definition, $G$ is order-$2$ bygone-state opaque with respect to agents $A_1$ and $A_2$ if and only if for every run 
  \begin{align*}
	q_{01} \xrightarrow[]{s_1} q_1 \xrightarrow[]{s_2} q_2 =: \pi_1
  \end{align*}
  with $q_{01}\in Q_0$ satisfying $|\Mt_1(G,\ell_1(s_1),\ell_1(s_1s_2))| = 1$, 
  there exists another run 
  \begin{align*}
	q_{02} \xrightarrow[]{s_3} q_3 \xrightarrow[]{s_4} q_4 =: \pi_2
  \end{align*}
  with $q_{02} \in Q_0$ such that $\ell_2(s_1) = \ell_2 (s_3)$, $\ell_2(s_2) = \ell_2 (s_4)$, 
  and $|\Mt_1(G,\ell_1(s_3),\ell_1(s_3s_4))| > 1$.

  Here $s_1$ and $s_3$ are observationally equivalent to $A_2$ ($\ell_2$), $s_2$ and $s_4$ 
  are also observationally equivalent to $A_2$. Then both $\Mt_1(G,\ell_1(s_1),\ell_1(s_1s_2))$ 
  and $\Mt_1(G,\ell_1(s_3),\ell_1(s_3s_4))$ are inferences of $A_1$'s bygone-state estimates to 
  $G$ done by $A_2$ according to $A_2$'s observations $\ell_2(s_1)$ and $\ell_2(s_1s_2)$ to $G$.
  By \autoref{thm12:High-OrderProperty}, $A_2$'s inferences of $A_1$'s bygone-state estimates to 
  $G$ based on $A_2$'s observations to $G$ and the structure of $G$ coincide with the
  reachable states of concurrent composition 
  $\overline{\CC}_{\ell_{\ep}}\left(\Obs_{A_1\leftarrow A_2} (G), \Obs_{A_1\leftarrow A_2}^{\inv} (G) \right)$ as in \eqref{eqn46:High-OrderProperty}. Then the above necessary and sufficient
  condition for order-$2$ bygone-state opacity can be equivalently restated as
  for every reachable
  state $Y$ of concurrent composition \eqref{eqn46:High-OrderProperty}, if there is $X_1\in Y$ 
  (corresponding to $\Mt_1(G,\ell_1(s_1),\ell_1(s_1s_2))$) with $|X_1|=1$,
  then there exists $X_2\in Y$ (corresponding to $\Mt_1(G,\ell_1(s_3),\ell_1(s_3s_4))$) 
  such that $|X_2|>1$.
\end{proof}

\begin{remark}\label{rem4:High-OrderProperty}
  By \autoref{rem3:High-OrderProperty}, \autoref{thm13:High-OrderProperty} provides a
  doubly-exponential-time algorithm for verifying order-$2$ bygone-state opacity of FSA $G$
  with respect to agents $A_1$ and $A_2$.
\end{remark}

See \autoref{exam11:High-OrderProperty} for an illustration.

\begin{example}\label{exam11:High-OrderProperty}
  Consider FSA $G_{ \label{cou1}}$ as in \autoref{fig1:High-OrderCSO}.
  We consider two agents $A_1$ and $A_2$ with observable event sets $E_1=\{b,c,d\}$ and $E_2=\{a,b\}$, respectively, their labeling functions are such that $\ell_1(b) = \ell_1(c) = \alpha$, $\ell_1(d) = \beta$, $\ell_2(a) = \gamma$, $\ell_2(b) = \zeta$.
  We use \autoref{thm13:High-OrderProperty} to verify order-$2$ begone-state opacity of
  $G$ with respect to $A_1$ and $A_2$.

  Observer $\Obs_{A_1}$, concurrent composition $\CC(G_{A_1},\Obs_{A_1})$, and 
  order-$2$ observer $\Obs_{A_1\leftarrow A_2}(G)$ are shown in \autoref{fig2:High-OrderCSO},
  \autoref{fig3:High-OrderCSO}, and \autoref{fig4:High-OrderCSO}, respectively.

  Inverse FSA $G^{\inv}$, inverse observer $\Obs_{A_1}^{\inv}$, concurrent composition $\CC\left( G_{A_1}^{\inv},\Obs_{A_1}^{\inv} \right)$, order-$2$ inverse observer
  $\Obs_{A_1\leftarrow A_2}^{\inv}(G)$, and concurrent composition 
  $\CC_{\ell_{\ep}} \left( \Obs_{A_1\leftarrow A_2}(G), \Obs_{A_1\leftarrow A_2}^{\inv}(G) \right)$
  are shown in \autoref{fig1:High-OrderProperty}, \autoref{fig2:High-OrderProperty},
  \autoref{fig3:High-OrderProperty}, \autoref{fig4:High-OrderProperty},
  and \autoref{fig5:High-OrderProperty},
  respectively.

  We illustrate \autoref{thm12:High-OrderProperty} and \autoref{thm13:High-OrderProperty}.
  In \autoref{fig5:High-OrderProperty}, we choose two event sequences $(\gamma,\ep)(\zeta,\ep)\\(\ep,\zeta)=:s_{\Ysf}^1$ 
  and $(\gamma,\ep)(\ep,\zeta)(\zeta,\ep)=:s_{\Ysf}^2$. One sees $s_{\Ysf}^1(L)=s_{\Ysf}^2(L)=\gamma\zeta$ and $s_{\Ysf}^1(R)=s_{\Ysf}^2(R)=\zeta$.
  These two sequences lead the
  concurrent composition $\CC_{\ell_{\ep}} \left( \Obs_{A_1\leftarrow A_2}(G),
  \Obs_{A_1\leftarrow A_2}^{\inv}(G) \right)$ to the same state $(Y_2,Y_2^{\inv})$.
  This illustrates item~\eqref{item7:High-OrderProperty} of \autoref{thm12:High-OrderProperty}.
  $Y_2\Cap Y_2^{\inv}=\{B\cap F\}=\{\{2\}\}$. Then item~\eqref{item4:High-OrderProperty} of
  \autoref{thm12:High-OrderProperty} implies that $\Mt_{A_1\leftarrow A_2}(G,\gamma\zeta,\gamma\zeta\zeta)=Y_2\Cap Y_2^{\inv}=\{\{2\}\}$.
  Next, we directly compute $\Mt_{A_1\leftarrow A_2}(G,\gamma\zeta,\gamma\zeta\zeta)$
  by definition as follows. (This is possible because $\ell_2^{-1}(\gamma\zeta\zeta)$ is 
  finite. However, generally it is not the case, and so it is not possible to directly
  compute an order-$2$ current-state estimate by definition.)
  $\ell_2^{-1}(\gamma\zeta\zeta)=\{abb,abbc\}$. We consider two cases (1) $abb$ and (2) $abbc$.
  For case (1), we split $abb$ to $ab$ and $b$ because $\ell_2(ab)=\gamma\zeta$ and $\ell_2(b)=\zeta$. $\ell_1(ab)=\alpha$, $\ell_1(b)=\alpha$, then one possibility of the bygone-state estimate of $G$ done by agent $A_1$ inferred by agent $A_2$ is $\Mt_1(G,\alpha,\alpha\alpha)=\dt_{\obs}^1(A,\alpha) \cap \dt_{\obs}^{1,\inv}(E,\alpha) = B\cap F = \{2\}$.
  For case (2), we split $abbc$ to $ab$ and $bc$ because $\ell_2(ab)=\gamma\zeta$ and $\ell_2(bc)=\zeta$. $\ell_1(ab)=\alpha$, $\ell_1(bc)=\alpha\alpha$, then the other possibility of the bygone-state estimate of $G$ done by agent $A_1$ inferred by agent $A_2$ is $\Mt_1(G,\alpha,\alpha\alpha\alpha)=\dt_{\obs}^1(A,\alpha) \cap \dt_{\obs}^{1,\inv}(E,\alpha\alpha) = B\cap F = \{2\}$. Then, we conclude that $\Mt_{A_1\leftarrow A_2}(G,\gamma\zeta,\gamma\zeta\zeta)=\{\{2\}\}$. This result is consistent with the above result obtained by using item~\eqref{item4:High-OrderProperty} of \autoref{thm13:High-OrderProperty}.

  By definition as in \eqref{eqn46:High-OrderProperty}, one sees that $Y_2\Cap Y_2^{\inv}$ is a reachable state of the
  concurrent composition $\overline{\CC}_{\ell_{\ep}} \left( \Obs_{A_1\leftarrow A_2}(G), \Obs_{A_1\leftarrow A_2}^{\inv}(G) \right)$. Moreover, $Y_2\Cap Y_2^{\inv}$ only contains a singleton $\{2\}$,
  then by \autoref{thm13:High-OrderProperty}, $G$ is not order-$2$ bygone-state opaque with respect to agents $A_1$ and $A_2$. Next we directly verify this result by definition.
  Consider the run $0 \xrightarrow[]{a} 1 \xrightarrow[]{b} 2 \xrightarrow[]{b} 3$. We
  consider $\ell_1(ab)=\alpha$, $\ell_1(b)=\alpha$. As shown before, $\Mt_1(G,\alpha,\alpha\alpha)=\{2\}$ which is a singleton. Also, we have $\ell_2(ab)=\gamma\zeta$ and $\ell_2(b)=\zeta$. We have shown that $\Mt_{A_1\leftarrow A_2}(G,\gamma\zeta,\gamma\zeta\zeta)=\{\{2\}\}$, which
  violates \autoref{def:order2-ByGState-Ooacity}, that is, we again obtain that $G$ is not order-$2$ bygone-state opaque with respect to agents $A_1$ and $A_2$.

  \begin{figure}[!htbp]
  \centering
  \subcaptionbox{FSA $G $ considered in \cite{Zhang2024High-OrderPropertyDES}, where $E_1=\{b,c,d\}$,
  $\ell_1(b) = \ell_1(c) = \alpha$, $\ell_1(d) = \beta$,
  $E_2=\{a,b\}$, $\ell_2(a) = \gamma$, $\ell_2(b) = \zeta$.\label{fig1:High-OrderCSO}}{
  	  \begin{tikzpicture}[>=stealth',shorten >=1pt,auto,node distance=2.0 cm, scale = 1.0, transform shape,
	>=stealth,inner sep=2pt]

	\node[initial, initial where = left, state] (0) {$0$};
	\node[state] (1) [right of =0] {$1$};
	\node[state] (2) [right of =1] {$2$};
	\node[state] (3) [right of =2] {$3$};
	\node[state] (4) [above right of =3] {$4$};
	\node[state] (5) [below right of =3] {$5$};

	\path [->]
	(0) edge node [above, sloped] {$a$} (1)
	(1) edge node [above, sloped] {$b$} (2)
	(2) edge [bend left] node {$b$} (3)
	(3) edge [bend left] node {$c$} (2)
	(3) edge node [above, sloped] {$a$} (4)
	(3) edge node [above, sloped] {$a$} (5)
	(4) edge [loop right] node {$d$} (4)
	;

    \end{tikzpicture}
  }\vspace{0.5cm}

  \subcaptionbox{Part of the observer $\Obs_{A_1}$ of LFSA $G_{A_1}$, where $A=\{0,1\}$, $B=\{2\}$, $C=\{3,4,5\}$, $D=\{4\}$. 
  \label{fig2:High-OrderCSO}}{
	\begin{tikzpicture}[>=stealth',shorten >=1pt,auto,node distance=2.5 cm, scale = 1.0, transform shape,
	>=stealth,inner sep=2pt]

	\node[initial, initial where = left, rectangular state] (01) {$A$};
	\node[rectangular state] (2) [right of =01] {$B$};
	\node[rectangular state] (345) [right of =2] {$C$};
	\node[rectangular state] (4) [right of =345] {$D$};

	\path [->]
	(01) edge node {$\alpha$} (2)
	(2) edge [bend left] node {$\alpha$} (345)
	(345) edge [bend left] node {$\alpha$} (2)
	(345) edge node {$\beta$} (4)
	(4) edge [loop right] node {$\beta$} (4)
	;

    \end{tikzpicture}
  }

  \subcaptionbox{Part of the concurrent composition $\CC(G_{A_1},\Obs_{A_1})$,
	where $A,B,C,D$ are as above.\label{fig3:High-OrderCSO}}{
	\begin{tikzpicture}[>=stealth',shorten >=1pt,auto,node distance=2.5 cm, scale = 1.0, transform shape,
	>=stealth,inner sep=2pt]

	\node[initial, initial where = left, rectangular state] (0A) {$(0,A)$};
	\node[rectangular state] (1A) [below =1cm of 0A] {$(1,A)$};
	\node[rectangular state] (2B) [right of =1A] {$(2,B)$};
	\node[rectangular state] (3C) [right of =2B] {$(3,C)$};
	\node[rectangular state] (4C) [above right of =3C] {$(4,C)$};
	\node[rectangular state] (5C) [below right of =3C] {$(5,C)$};
	\node[rectangular state] (4D) [right of =4C] {$(4,D)$};

	\path [->]
	(0A) edge node {$(a,\ep)$} (1A)
	(1A) edge node {$(b,\alpha)$} (2B)
	(2B) edge [bend left] node {$(b,\alpha)$} (3C)
	(3C) edge [bend left] node {$(c,\alpha)$} (2B)
	(3C) edge node [sloped, above] {$(a,\ep)$} (4C)
	(3C) edge node [sloped, above] {$(a,\ep)$} (5C)
	(4C) edge node {$(d,\beta)$} (4D)
	(4D) edge [loop right] node {$(d,\beta)$} (4D)
	;

    \end{tikzpicture}
  }\vspace{0.5cm}

  \subcaptionbox{Part of the order-$2$ observer 
	$\Obs_{A_1\leftarrow A_2}(G )$,
	where $A,B,C,D$ are as above.
 	\label{fig4:High-OrderCSO}}{
	\begin{tikzpicture}[>=stealth',shorten >=1pt,auto,node distance=2.5 cm, scale = 1.0, transform shape,>=stealth,inner sep=2pt, align=center]

	  \node[initial, initial where = left, rectangular state] (0A) {\begin{tabular}{c}$Y_0$\\\hline$\{(0,A)\}$\end{tabular}};
	  \node[rectangular state] (1A) [below =1cm of 0A] {\begin{tabular}{c}$Y_1$\\\hline$\{(1,A)\}$\end{tabular}};
	  \node[rectangular state] (2B) [right of =1A] {\begin{tabular}{c}$Y_2$\\\hline$\{(2,B)\}$\end{tabular}};
	  \node[rectangular state] (2B3C) [right =1cm of 2B] {\begin{tabular}{c}$Y_3$\\\hline$\{(2,B),(3,C)\}$\end{tabular}};
	  \node[rectangular state] (4C5C4D) [right =1cm of 2B3C] {\begin{tabular}{c}$Y_4$\\\hline$\{(4,C),(5,C),(4,D)\}$\end{tabular}};

	\path [->]
	(0A) edge node {$\gamma$} (1A)
	(1A) edge node {$\zeta$} (2B)
	(2B) edge node {$\zeta$} (2B3C)
	(2B3C) edge node {$\gamma$} (4C5C4D)
	(2B3C) edge [loop above] node {$\zeta$} (2B3C)
	;

    \end{tikzpicture}
  }\vspace{0.5cm}

  \caption{illustrative automata in \autoref{exam11:High-OrderProperty}.}
  \end{figure}

  \begin{figure}[!htbp]
  \centering
  \subcaptionbox{FSA $G^{\inv}$, where $G$ is shown in \autoref{fig1:High-OrderCSO}.\label{fig1:High-OrderProperty}}{
  	  \begin{tikzpicture}[>=stealth',shorten >=1pt,auto,node distance=2.0 cm, scale = 1.0, transform shape,
	>=stealth,inner sep=2pt]

	\node[initial, initial where = above, state] (0) {$0$};
	\node[initial, initial where = above, state] (1) [right of =0] {$1$};
	\node[initial, initial where = above, state] (2) [right of =1] {$2$};
	\node[initial, initial where = above, state] (3) [right of =2] {$3$};
	\node[initial, initial where = above, state] (4) [above right of =3] {$4$};
	\node[initial, initial where = above, state] (5) [below right of =3] {$5$};

	\path [->]
	(1) edge node [above, sloped] {$a$} (0)
	(2) edge node [above, sloped] {$b$} (1)
	(3) edge [bend right] node [above] {$b$} (2)
	(2) edge [bend right] node [below] {$c$} (3)
	(4) edge node [above, sloped] {$a$} (3)
	(5) edge node [above, sloped] {$a$} (3)
	(4) edge [loop right] node {$d$} (4)
	;

    \end{tikzpicture}
  }\vspace{0.5cm}
  \subcaptionbox{Part of the inverse observer $\Obs_{A_1}^{\inv}$ of LFSA $G_{A_1}$, where 
	$G_{A_1}$ is shown in \autoref{fig1:High-OrderCSO}, $E=\{0,1,2,3,4,5\}$, $F=\{0,1,2,3\}$, $G=\{3,4\}$, $B=\{2\}$, $H=\{0,1,3\}$. 
  \label{fig2:High-OrderProperty}}{
	\begin{tikzpicture}[>=stealth',shorten >=1pt,auto,node distance=2.5 cm, scale = 1.0, transform shape,
	  >=stealth,inner sep=2pt, align=center]

	\node[initial, initial where = left, rectangular state] (E) {$E$};
	\node[rectangular state] (F) [above right of =E] {$F$};
	\node[rectangular state] (G) [below right of =E] {$G$};
	\node[rectangular state] (B) [above right of =G] {$B$};
	\node[rectangular state] (H) [right of =B] {$H$};

	\path [->]
	(E) edge node {$\alpha$} (F)
	(E) edge node {$\beta$} (G)
	(F) edge [loop above] node {$\alpha$} (F)
	(G) edge node {$\alpha$} (B)
	(G) edge [loop below] node {$\beta$} (G)
	(B) edge [bend left] node {$\alpha$} (H)
	(H) edge [bend left] node {$\alpha$} (B)
	;

    \end{tikzpicture}
  }

  \caption{illustrative automata in \autoref{exam11:High-OrderProperty}.}
\end{figure}

\begin{figure}[!htbp]
  \centering
	\begin{tikzpicture}[>=stealth',shorten >=1pt,auto,node distance=2.3cm, scale = 1.0, transform shape,
	>=stealth,inner sep=2pt]
	\tikzset{rectangular state/.style={draw, rectangle, thick, fill=gray!10, inner sep = 3pt}}

	\node[initial, initial where = left, rectangular state] (0E) {$(0,E)$};
	\node[initial, initial where = left, rectangular state] (1E) [below =0.7cm of 0E] {$(1,E)$};
	\node[initial, initial where = left, rectangular state] (2E) [below =0.7cm of 1E] {$(2,E)$};
	\node[initial, initial where = left, rectangular state] (3E) [below =0.7cm of 2E] {$(3,E)$};
	\node[initial, initial where = left, rectangular state] (4E) [below =0.7cm of 3E] {$(4,E)$};
	\node[initial, initial where = left, rectangular state] (5E) [below =0.7cm of 4E] {$(5,E)$};

	\node[rectangular state] (1F) [right of =1E] {$(1,F)$};
	\node[rectangular state] (0F) [above = 0.7cm of 1F] {$(0,F)$};
	\node[rectangular state] (3F) [below = 0.7cm of 1F] {$(3,F)$};
	\node[rectangular state] (2F) [right of =3F] {$(2,F)$};

	\node[rectangular state] (4G) [right of =4E] {$(4,G)$};
	\node[rectangular state] (3G) [right of =4G] {$(3,G)$};
	\node[rectangular state] (2B) [right of =3G] {$(2,B)$};
	\node[rectangular state] (1H) [above right =0.9cm of 2B] {$(1,H)$};
	\node[rectangular state] (3H) [below right =0.9cm of 2B] {$(3,H)$};
	\node[rectangular state] (0H) [right of = 1H] {$(0,H)$};

	\path [->]
	(1E) edge node [right] {$(a,\ep)$} (0E)
	(2E) edge node [sloped, above] {$(b,\alpha)$} (1F)
	(1F) edge node [right] {$(a,\ep)$} (0F)
	(2E) edge node [sloped, below] {$(c,\alpha)$} (3F)
	(2F) edge [bend right] node [sloped, above] {$(b,\alpha)$} (1F)
	(3F) edge [bend left] node [sloped, above] {$(b,\alpha)$} (2F)
	(2F) edge [bend left] node [sloped, below] {$(c,\alpha)$} (3F)
	(3E.0) edge [bend right] node [sloped, below] {$(b,\alpha)$} (2F.-40)
	(4E) edge node [right] {$(a,\ep)$} (3E)
	(4E) edge node {$(d,\beta)$} (4G)
	(4G) edge node {$(a,\ep)$} (3G)
	(4G) edge [loop below] node {$(d,\beta)$} (4G)
	(3G) edge node {$(b,\alpha)$} (2B)
	(2B) edge node [sloped, above] {$(b,\alpha)$} (1H)
	(2B) edge [bend right] node [sloped, below] {$(c,\alpha)$} (3H)
	(3H) edge [bend right] node [sloped, above] {$(b,\alpha)$} (2B)
	(1H) edge node {$(a,\ep)$} (0H)
	(5E) edge [bend left=75] node [sloped, above] {$(a,\ep)$} (3E)
	;

    \end{tikzpicture}
	\caption{Part of the concurrent composition $\CC\left(G_{A_1}^{\inv},\Obs_{A_1}^{\inv}\right)$,
	where $G_{A_1}^{\inv}$ is in \autoref{fig1:High-OrderProperty}, 
    $\Obs_{A_1}^{\inv}$ is in \autoref{fig2:High-OrderProperty}, $E=\{0,1,2,3,4,5\}$,
    $F=\{0,1,2,3\}$, $G=\{3,4\}$, $B=\{2\}$, $H=\{0,1,3\}$.}
    \label{fig3:High-OrderProperty}
  \end{figure}

  \begin{figure}[!htbp]
	\centering
	\begin{tikzpicture}[>=stealth',shorten >=1pt,auto,node distance=2.5 cm, scale = 1.0, transform shape,
	  >=stealth,inner sep=2pt, rotate=90, align=left]

	  \node[initial, initial where = left, rectangular state] (Y0inv) {\begin{tabular}{c}$Y_0^{\inv}$\\\hline$\{(0,E),(1,E),$\\$(2,E),(3,E),$\\$(4,E),(5,E),$\\$(3,F),(4,G)\}$\end{tabular}};
	  \node[rectangular state] (Y1inv) [above right =1cm of Y0inv] {\begin{tabular}{c}$Y_1^{\inv}$\\\hline$\{(0,E),(3,E)$,\\$(3,G)\}$\end{tabular}};
	  \node[rectangular state] (Y2inv) [below right =1cm of Y0inv] {\begin{tabular}{c}$Y_2^{\inv}$\\\hline$\{(1,F),(2,F)$,\\$(3,F)\}$\end{tabular}};
	  \node[rectangular state] (Y3inv) [right =1cm of Y1inv] {\begin{tabular}{c}$Y_3^{\inv}$\\\hline$\{(2,F),(3,F)$,\\$(2,B),(3,H)\}$\end{tabular}};
	  \node[rectangular state] (Y4inv) [right =1cm of Y2inv] {\begin{tabular}{c}$Y_4^{\inv}$\\\hline$\{(0,F)\}$\end{tabular}};
	  \node[rectangular state] (Y5inv) [right =1cm of Y3inv] {\begin{tabular}{c}$Y_5^{\inv}$\\\hline$\{(1,F),(2,F),$\\$(3,F),(1,H),$\\$(2,B),(3,H)\}$\end{tabular}};
	  \node[rectangular state] (Y6inv) [right =1cm of Y5inv] {\begin{tabular}{c}$Y_6^{\inv}$\\\hline$\{(0,F),(0,H)\}$\end{tabular}};

	\path [->]
	(Y0inv) edge node [above] {$\gamma$} (Y1inv)
	(Y0inv) edge node [above] {$\zeta$} (Y2inv)
	(Y1inv) edge node [above] {$\zeta$} (Y3inv)
	(Y2inv) edge node [above] {$\gamma$} (Y4inv)
	(Y3inv) edge node [above] {$\zeta$} (Y5inv)
	(Y5inv) edge node [above] {$\gamma$} (Y6inv)
	(Y2inv) edge [loop below] node {$\zeta$} (Y2inv)
	(Y5inv) edge [loop below] node {$\zeta$} (Y5inv)
	;

    \end{tikzpicture}
	\caption{Part of the order-$2$ inverse observer $\Obs_{A_1\leftarrow A_2}^{\inv}(G) =
	\Obs\left( \CC_{A_1\to A_2}^{G^{\inv},\Obs^{\inv}} \right)$, where 
	$\CC_{A_1\to A_2}^{G^{\inv},\Obs^{\inv}}$ is obtained
	from \autoref{fig3:High-OrderProperty}
	by changing each event $(*,**)$ to $*$ and then being endowed with the labeling 
	function $\ell_2$ as in \autoref{fig1:High-OrderCSO},
	$E=\{0,1,2,3,4,5\}$, $F=\{0,1,2,3\}$, $G=\{3,4\}$, $B=\{2\}$, $H=\{0,1,3\}$.}
 	\label{fig4:High-OrderProperty}
  \end{figure}

  \begin{figure}[!htbp]
	\centering
	\begin{tikzpicture}[>=stealth',shorten >=1pt,auto,node distance=3.0 cm, scale = 1.0, transform shape,
	  >=stealth,inner sep=2pt, rotate=0, align=left]
	  \tikzset{rectangular state/.style={draw, rectangle, thick, fill=gray!10, inner sep = 3pt}
}

	  \node[initial, initial where = left, rectangular state] (00) {$\left( Y_0,Y_0^{\inv} \right)$};
	  \node[rectangular state] (01) [right of = 00] {$\left( Y_0,Y_1^{\inv} \right)$};
	  \node[rectangular state] (10) [above of = 01] {$\left( Y_1,Y_0^{\inv} \right)$};
	  \node[rectangular state] (02) [below of = 01] {$\left( Y_0,Y_2^{\inv} \right)$};

	  \node[rectangular state] (12) [right of = 10] {$\left( Y_1,Y_2^{\inv} \right)$};
	  \node[rectangular state] (20) [above = 0.5cm of 12] {$\left( Y_2,Y_0^{\inv} \right)$};
	  \node[rectangular state] (11) [below = 0.5cm of 12] {$\left( Y_1,Y_1^{\inv} \right)$};
	  \node[rectangular state] (03) [right of = 01] {$\left( Y_0,Y_3^{\inv} \right)$};

	  \node[rectangular state] (04) [right of = 02] {$\left( Y_0,Y_4^{\inv} \right)$};

	  \node[rectangular state] (21) [right of = 20] {$\left( Y_2,Y_1^{\inv} \right)$};
	  \node[rectangular state] (30) [above = 0.5cm of 21] {$\left( Y_3,Y_0^{\inv} \right)$};
	  \node[rectangular state] (22) [below = 0.5cm of 21] {$\left( Y_2,Y_2^{\inv} \right)$};

	\path [->]
	(00) edge node [above] {$(\ep,\gamma)$} (01)
	(00) edge node [sloped, above] {$(\gamma,\ep)$} (10)
	(00) edge node [sloped, above] {$(\ep,\zeta)$} (02)
	(02) edge [loop below] node {$(\ep,\zeta)$} (02)

	(10) edge node [sloped, above] {$(\zeta,\ep)$} (20.180)
	(10) edge node [sloped, above] {$(\ep,\gamma)$} (11)
	(10) edge node [sloped, above] {$(\ep,\zeta)$} (12)
	(01) edge node [sloped, above] {$(\ep,\zeta)$} (03)
	(01) edge node [sloped, above] {$(\gamma,\ep)$} (11)
	(02) edge node [sloped, above] {$(\ep,\gamma)$} (04)
	(02) edge [bend left=13] node [sloped, above, pos=0.26] {$(\gamma,\ep)$} (12)
	
	(20) edge node [sloped, above] {$(\zeta,\ep)$} (30)
	(20) edge node [sloped, above] {$(\ep,\gamma)$} (21)
	(20) edge node [sloped, above] {$(\ep,\zeta)$} (22)
	(12) edge node [sloped, above] {$(\zeta,\ep)$} (22)
	;

    \end{tikzpicture}
	\caption{Part of the concurrent composition $\CC_{\ell_{\ep}} \left( \Obs_{A_1\leftarrow A_2}(G), \Obs_{A_1\leftarrow A_2}^{\inv}(G) \right)$, 
	  where the order-$2$ observer $\Obs_{A_1\leftarrow A_2}(G)$ is in \autoref{fig4:High-OrderCSO}, the order-$2$ inverse observer $\Obs_{A_1\leftarrow A_2}^{\inv}(G)$ is in 
	\autoref{fig4:High-OrderProperty}.}
 	\label{fig5:High-OrderProperty}
  \end{figure}

\end{example}

\section{Conclusion}

Given a finite-state automaton $G$ publicly known to agents
$A_1$ and $A_2$, 
assuming that each agent has its own observable event set of $G$ and $A_2$ knows
$A_1$'s observable events and labeling function, a notion of 
order-$2$ bygone-state opacity was
formulated to characterize what agent $A_2$ knows about 
$A_1$'s bygone-state estimate of $G$. Based on our concurrent composition and the
classical observer, we derived a tool to verify order-$2$ bygone-state opacity
in doubly exponential time. From the procedure to derive the tool, one will 
see that the steps are very natural. However, without following this procedure,
it will be very difficult to see whether one can derive another tool that
can verify order-$2$ bygone-state opacity. Such a formulation is closely related 
to an active area in mathematical logic and mathematical philosophy called
dynamic epistemic logic which characterizes knowledge exchange between 
different agents such as ``whether agent $A_1$ knows what/how/why agent $A_2$ knows''
\cite{DynamicEpistemicLogic2008,Baltag2025TopologyOfSurprise,Li2024Higher-orderEpistemicLogic}
but in the latter area, formulations are often 
done by logical fragments. Relations and differences between these two types of 
representations for dynamical information exchanges between agents are worthy of
further exploration.



\end{document}